\tikzstyle{block} = [draw, fill=white, rectangle,
\tikzstyle{input} = [coordinate]
\tikzstyle{output} = [coordinate]
\newcommand\E{\mathbb{E}}
\renewcommand\P{\mathbb{P}}
\newtheorem{theorem}{Theorem}
\newtheorem{lemma}{Lemma}
\newtheorem{proposition}{Proposition}
\newtheorem{corollary}{Corollary}
\newtheorem{definition}{Definition}
\newtheorem{remark}{Remark}
\definecolor{blueR}{RGB}{0, 0, 0}
\begin{document}

\title{Maximal Leakage of Masked Implementations\\Using Mrs.\ Gerber's Lemma for Min-Entropy}

 \author{
	   \IEEEauthorblockN{Julien Béguinot$^{1}$, Yi Liu$^{1}$, Olivier Rioul$^{1}$, Wei Cheng$^{1,2}$, and Sylvain Guilley$^{1,2}$}
	   \IEEEauthorblockA{$^1$LTCI, Télécom Paris, Institut Polytechnique de Paris, France \\ 
	   					 $^2$Secure-IC S.A.S., France\\
	   					\textit{firstname.lastname}@telecom-paris.fr}
	 }

\maketitle

\begin{abstract}
	A common countermeasure against side-channel attacks on secret key cryptographic implementations is $d$th-order masking, which splits each sensitive variable into $d+1$ random shares.
	In this paper, maximal leakage bounds on the probability of success of any side-channel attack are derived for any masking order. 
	\textcolor{blueR}{Maximal leakage (Sibson's information of order infinity) is evaluated between the sensitive variable and the noisy leakage}, and is related to the conditional ``min-entropy'' (Arimoto's entropy of order infinity) of the sensitive variable given the leakage. The latter conditional entropy is then lower-bounded in terms of the conditional entropies for each share using majorization inequalities.
	This yields a generalization of Mrs. Gerber's lemma for min-entropy in finite Abelian groups.
\end{abstract}

\section{Introduction}
\label{sec:intro}

When a cryptographic device is operating, any kind of physical leakage (time, power, electromagnetic emanations, etc.) can be exploited by an attacker.
The attacker queries the device multiple times, and measures the corresponding leakages to infer the secret key.
The security of devices against side-channel attacks has become a major concern.

To evaluate the probability of success for any side-channel attack, information-theoretic metrics turn out to be effective and have been used in many studies. Using conditional mutual information and Fano's inequality, de Chérisey et al.~\cite{CheriseyGuilleyRioulPiantanida19} established several universal bounds on the probability of success for a given number of queries, or equivalently, the minimum number of queries required to achieve a given level of success.
This approach has been extended to conditional Sibson's $\alpha$-information by Liu et al.~\cite{LCGR21}. However,
both~\cite{CheriseyGuilleyRioulPiantanida19} and~\cite{LCGR21} were restricted to unprotected cryptographic devices.

\emph{Masking} is one of the most well-established countermeasures.
The main issue in this context is the fact that a direct evaluation of the information leakage requires data and computational complexities that increase rapidly with the masking order~\cite{cheng2022attacking}.
Therefore, it is important to derive bounds in terms of the individual information leakages for each share.

Duc et al.~\cite{DucFS15} conjectured a general form of such bounds. 
Rigorous bounds were obtained in two independent recent works by Ito et al.~\cite{DBLP:conf/ccs/ItoUH22} and Masure et al.~\cite{DBLP:journals/iacr/MasureRS22}.
Even more recently, Béguinot et al.~\cite{DBLP:journals/iacr/BeguinotCGLMRS22} improved these results using Mrs. Gerber's lemma~\cite{Wyner1973ATO, Jog2013TheEPI}
 to derive sharp bounds in terms of mutual information for masking in additive groups of order $2^n$.

In the case of unprotected implementations (without masking), it is shown by simulation in~\cite{LCGR21} that the probability of success of a side-channel attack is evaluated using Sibson's $\alpha$-information all the more accurately as $\alpha$ increases. Therefore, the case of mutual information, which corresponds to $\alpha=1$ is not optimal.  This motivates the derivation of new bounds in the limiting case $\alpha=+\infty$.

The usual setup of masking countermeasures involves bitwise XOR (exclusive or) operations, which are particularly well suited to symmetric cryptographic algorithms like AES. However, modern cryptography also relies on operations performed in groups of prime order, and masking can also be multiplicative~\cite{DBLP:conf/ches/AkkarG01} and not only additive~\cite{DBLP:conf/ches/GoubinP99}. For all these reasons, there is a strong incentive to extend the previous bounds to arbitrary finite Abelian groups.
This motivates the generalization of Mrs. Gerber's lemma to any such Abelian group.

Mrs. Gerber's lemma was initially derived by Wyner and Ziv~\cite{Wyner1973ATO} to lower bound the entropy of a modulo 2 addition of binary random variables in terms of the entropies of each summand.  It was extended by Jog and Anatharam~\cite{Jog2013TheEPI} to the case of additive groups of order $2^n$, and  by Hirche~\cite{Hirche2020} to the case of Rényi entropy of binary variables.
The general case of additive groups was only considered by Tao~\cite{Tao2009SumsetAI} for Shannon entropy and independent copies of two shares, in relation to sumset theory.
While the original binary Mrs. Gerber's lemma was used to derive a binary version of the entropy power inequality~\cite{ShamaiWyner90}, a generalization of the entropy power inequality to any prime cyclic additive group and Rényi entropy was investigated by Madiman et al.~\cite{madiman2021entropy}, but does not reduce to an explicit ``Mrs. Gerber's lemma''-type inequality.
Therefore, it appears that the case of min-entropy (Rényi entropy of order $\infty$) and additive groups of any order has not been investigated yet in our context.

\subsection*{Contributions}

In this paper, we show that when evaluating the performance of side-channel attacks of masked implementations using conditional Sibson's $\alpha$-information, the exact performance of optimal maximum likelihood attacks is attained in the limiting case \hbox{$\alpha=+\infty$}.
This motivates the investigation of Mrs. Gerber's lemma for conditional min-entropy (Arimoto's conditional entropy of order $\infty$).
We derive a variation of such Mrs. Gerber's lemma for any finite Abelian group and for any masking order.

The remainder of this paper is organized as follows. Section~\ref{sec-defs} gives some notations and preliminaries on $\alpha$-informational quantities. Section~\ref{sec-opt} shows that the optimal evaluation of side-channel attack success by Fano's inequality is achieved in the limiting case \hbox{$\alpha=+\infty$} and derives the corresponding bound in terms of the information between the sensitive variable and the leakage, which is linear in the number of queries.
Section~\ref{sec-MGL} derives Mrs. Gerber's lemma for min-entropy, first for two summands in any finite Abelian group, then extends it to the general case of $d+1$ summands.
Section~\ref{sec-concl} concludes and gives some perspectives.

\section{Preliminaries and Notations}\label{sec-defs}

\subsection{Framework and Notations}\label{sca-notion}

Let $K$ be the secret key and $T$ be a public variable (usually plaintext or ciphertext) known to the attacker. It is assumed that $T$ is independent of $K$, and $K$ is uniformly distributed over an Abelian group $\mathcal{G}$ of order $M$. The cryptographic algorithm operates on
$K$ and $T$ to compute
a sensitive variable~$X$, which takes values in the same group~$\mathcal{G}$ and is determined by $K$ and~$T$, in such a way that $X$ is also uniformly distributed over~$\mathcal{G}$.

In a masking scheme of order $d$, the sensitive variable~$X$ is randomly split into $d+1$ \emph{shares} $X_0$, $X_1$, \ldots, $X_d$ and cryptographic operations are performed on each share separately. Thus,
$X=X_0 \oplus X_1 \oplus \cdots \oplus X_d$, where each share $X_i$ is a uniformly distributed random variable over $\mathcal{G}$ and $\oplus$ is the group operation in~$\mathcal{G}$. For this group operation, we let $\ominus g$ denote the opposite of $g \in \mathcal{G}$.
A typical example is ``Boolean masking'', for which $\oplus\equiv\ominus$ is the bitwise XOR operation.

During computation, shares $\bm{X}=(X_0, X_1, \ldots, X_d)$ are leaking through some side channel. Noisy ``traces,'' denoted by $\bm{Y}=({Y}_0, {Y}_1, \ldots, {Y}_d)$, are measured by the attacker,
where~$\bm{Y}$ is the output of a memoryless side channel with input $\bm{X}$.
Since masking shares are drawn uniformly and independently, both $\bm{X}$ and $\bm{Y}$ are i.i.d. sequences.
The attacker measures $m$ traces \hbox{$\bm{Y}^m=(\bm{Y}_1,\bm{Y}_2,\ldots,\bm{Y}_m)$}
corresponding to the i.i.d. text sequence \hbox{$T^m=(T_1,T_2,\ldots,T_m)$}, then exploits her knowledge of $\bm{Y}^m$ and $T^m$ to guess the secret key $\hat{K}$.
Again, since the side-channel is memoryless, both $\bm{X}^m$ and $\bm{Y}^m$ are i.i.d. sequences.

Let \hbox{$\P_s=\P(K=\hat{K})$} be the probability of success of the attack upon observing $T^m$ and $\bm{Y}^m$.
In theory, maximum success is obtained by the MAP (maximum a posteriori probability) rule with success probability denoted by $\P_s=\P_s(K|\bm{Y}^m,T^m)$.
The whole process is illustrated
in Fig.~\ref{model1}.

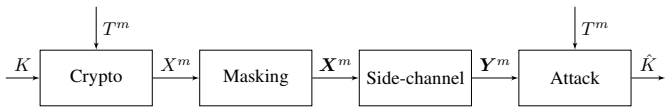
\begin{figure}[h!]
	\centering\vspace*{-1ex}
	\resizebox{\linewidth}{!}{
		\begin{tikzpicture}[auto, node distance=1.7cm,>=latex']
			\node [input, name=input] {};
			\node [block, right of=input] (crypto) {Crypto};
			\node [block, right of=crypto,  node distance=3cm] (masking) {Masking};
			\node [block, right of=masking,  node distance=3cm] (channel) {Side-channel};
			\node [block, right of=channel,  node distance=3cm] (attack) {Attack};
			\node [output, right of=attack] (output) {};
			\node[input] (text) at ([yshift=1.35cm]$(crypto)!0.0!(masking)$) {};
			\node[input] (text2) at ([yshift=1.35cm]$(channel)!1.0!(attack)$) {};
			\draw [->] (crypto) -- node[name=X] {$X^m$} (masking);
			\draw [->] (masking) -- node[name=X] {$\bm{X}^m$} (channel);
			\draw [->] (input) -- node {$K$} (crypto);
			\draw [->] (channel)-- node [name=Y] {${\bm{Y}}^m$}(attack);
			\draw [->] (attack) -- node [name=Khat] {$\hat{K}$}(output);
			\draw[->] (text) -- node{$T^m$} (crypto);
			\draw[->] (text2) -- node{$T^m$} (attack);
	\end{tikzpicture} }
	\caption{Side-channel analysis as a (unintended) ``communication'' channel. {\color{blueR} ``Crypto'' can be any sensitive computation (encryption or decryption). $T$ is a public random variable (e.g., a plain or cipher text byte).}}
	\label{model1}
\end{figure}

\vspace{-1em}
\subsection{R\'enyi's $\alpha$-Entropy and Arimoto's Conditional $\alpha$-Entropy}

Assume that either \hbox{$0<\alpha<1$} or \hbox{$1<\alpha<+\infty$} (the limiting values $0,1,+\infty$ can be obtained by taking limits).  We consider probability distributions $P,Q$ with a dominating measure $\mu$,
with respect to which they follow densities denoted by the corresponding lower-case letters $p,q$.
We follow the notations of~\cite{LCGR21} in the following
\begin{definition}[Rényi $\alpha$-Entropy and $\alpha$-Divergence]
	\begin{align}
		H_\alpha(P)&= \tfrac{\alpha}{1-\alpha} \log\|p\|_\alpha\\
		D_\alpha (P\|Q)&=  \tfrac{1}{\alpha-1} \log\langle p\|q\rangle^\alpha_\alpha \label{divergence}
	\end{align}
	with the special notation:
	\begin{align}
		\|p\|_\alpha &= \bigl(\int |p|^\alpha d\mu \bigr)^{1/\alpha}  \label{alpha-norm}\\
		\langle p\|q\rangle_\alpha &= \bigl(\smash{\int} p^\alpha q^{1-\alpha} d\mu \bigr)^{1/\alpha}.
	\end{align}
\end{definition}
The usual Shannon entropy and Kullback-Leibler divergence are recovered by letting $\alpha\to 1$. The $\alpha$-entropy is nonincreasing in $\alpha$ and achieves its \emph{min-entropy} $H_\infty$ at the limit $\alpha =\infty$:

\begin{definition}[Min-Entropy]
	For a probability distribution $P$ over a finite alphabet, the min-entropy is
	\begin{equation}
		H_\infty(P)=- \log ( \max~ p).
	\end{equation}
\end{definition}

Many different definitions of conditional $\alpha$-entropy $H_\alpha(X|Y)$ were proposed in the literature. We use Arimoto's definition, which is argued to be the most promising one~\cite{FehrBerens14}:
\begin{definition}[Arimoto's Conditional $\alpha$-Entropy~\cite{Arimoto75}]
	The conditional $\alpha$-entropy of $X$ given $Y$ is defined as
	\begin{equation}
		H_\alpha(X|Y) = \frac{\alpha}{1-\alpha}\log \E_Y \|p_{X|Y}\|_\alpha.
	\end{equation}
\end{definition}
Assuming $X$ takes values in a finite alphabet, the conditional min-entropy can be obtained by letting $\alpha \to \infty$ in $H_\alpha(X|Y)$:
\begin{definition}[Conditional Min-Entropy~\cite{teixeira2012}]
	\begin{equation}
		H_\infty(X|Y)= - \log ( \E_Y \max\limits_{x} p_{X|Y} ) = - \log \P_s(X|Y)
	\end{equation}
where $\P_s(X|Y)$ is the maximum average probability of success in estimating $X$ having observed $Y$, by the MAP rule.
\end{definition}

\subsection{Sibson's $\alpha$-Information and Liu et al.'s Conditional Version}

Again, several different definitions of $\alpha$-information $I_\alpha(X;Y)$ have been proposed, and Sibson’s $\alpha$-information is perhaps the most appropriate one because it satisfies several useful properties that other definitions do not~\cite{Verdu15}.
\begin{definition}[Sibson's $\alpha$-Information~\cite{Sibson69,Verdu15}]
	\begin{align}
		I_\alpha(X;Y) &= \min_{Q_Y} D_\alpha(P_{XY}\|P_X \times Q_Y) \\
		&= \tfrac{\alpha}{\alpha-1} \log \E_Y \langle p_{X|Y}\|p_X\rangle_\alpha.
	\end{align}
\end{definition}

\begin{definition}[Max-Information~{\cite[Thm. 4]{HoVerdu15}}]\label{maximal}
	Assuming $X,Y$ are discrete random variables, one has
	\begin{equation}\label{max-i}
		I_\infty (X;Y) = \log \sum_y \sup\limits_{x:p_X(x)>0} p_{Y|X}(y|x).
	\end{equation}
\end{definition}
Max-information is also studied in~\cite{Issa20} as \emph{maximal leakage}.

Again, there are many different proposals for \emph{conditional} $\alpha$-information. We use the following definition which seems most appropriate in the context of side-channel analysis~\cite{LCGR21}:
\begin{definition}[Conditional $\alpha$-Information~\cite{LCGR21}]
	\begin{align}
		I_\alpha(X;Y|Z) &= \min\limits_{Q_{YZ}}D_\alpha(P_{XYZ}\|P_{X|Z}Q_{YZ}) \\
		&= \tfrac{\alpha}{\alpha-1} \log \E_{YZ} \langle p_{X|YZ}\|p_{X|Z}\rangle_\alpha.
	\end{align}
\end{definition}

\bigskip

\section{Fano's Equality for Order $\infty$: Linear Bound}\label{sec-opt}

\subsection{Fano Inequality for Conditional $\alpha$-Information as \hbox{$\alpha\to\infty$}}

Using conditional $\alpha$-information, Liu et al.\cite{LCGR21} derived a universal bound on
the probability of success as follows.
\begin{theorem}[Generalized Fano's Inequality{\cite[Thm.~1]{LCGR21}}]
	\begin{equation}\label{fano}
		I_\alpha(K;\bm{Y}^m|T^m) \geq d_\alpha(\P_s(K|\bm{Y}^m,T^m) \| (\P_s(K)))
	\end{equation}
	where $d_\alpha(p\|q)$ is the binary $\alpha$-divergence:
	\begin{equation}
		d_\alpha(p\|q) =\tfrac{1}{\alpha-1} \log (p^\alpha q^{1-\alpha} +(1-p)^\alpha (1-q)^{1-\alpha}).
	\end{equation}
\end{theorem}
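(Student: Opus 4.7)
My plan is to establish the bound by two successive applications of the data processing inequality (DPI) for Sibson's $\alpha$-divergence, collapsing the full side-channel observation $\bm{Y}^m$ first to the MAP estimate $\hat{K}$ and then to the binary correct/incorrect indicator. The reductions mirror the standard Fano argument, and the uniform prior on $K$ (so that $\P_s(K) = 1/M$) serves as the reference distribution throughout.

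First, I would start from the variational form
\begin{equation*}
I_\alpha(K;\bm{Y}^m|T^m) = \min_{Q_{\bm{Y}^m T^m}} D_\alpha\!\left(P_{K\bm{Y}^m T^m}\,\|\,P_{K|T^m}\,Q_{\bm{Y}^m T^m}\right),
\end{equation*}
and exploit that $\hat{K}=\hat{K}(\bm{Y}^m,T^m)$ is deterministic. Applying the DPI for $\alpha$-divergence to the map $(k,\bm{y},t)\mapsto(k,\hat{K}(\bm{y},t),t)$ on both arguments of $D_\alpha$ lower-bounds each term by $D_\alpha(P_{K\hat{K}T^m}\,\|\,P_{K|T^m}\,Q'_{\hat{K}T^m})$, where $Q'$ is the push-forward of $Q$. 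Since every such push-forward is an admissible candidate in the variational problem for $I_\alpha(K;\hat{K}|T^m)$, minimizing over $Q$ yields $I_\alpha(K;\bm{Y}^m|T^m)\geq I_\alpha(K;\hat{K}|T^m)$. Using $K\perp T^m$ and the uniformity of $K$ over $\mathcal{G}$, we also have $P_{K|T^m}=P_K$ with $P_K(k)=1/M$.

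Second, I would apply DPI once more to the deterministic reduction $(k,\hat{k},t)\mapsto(\mathbf{1}_{k=\hat{k}},t)=(E,t)$. Under $P_{K\hat{K}T^m}$ this induces $\P(E=1)=\P_s(K|\bm{Y}^m,T^m)$ by the MAP rule; under $P_K\otimes Q_{\hat{K}T^m}$, the independence of $K$ together with its uniformity force $\P(E=1)=(1/M)\sum_{\hat{k},t}Q(\hat{k},t)=1/M=\P_s(K)$, irrespective of $Q$. Consequently the right-hand side collapses to a Bernoulli-versus-Bernoulli $\alpha$-divergence with parameters $\P_s(K|\bm{Y}^m,T^m)$ and $\P_s(K)$, which is exactly $d_\alpha(\P_s(K|\bm{Y}^m,T^m)\,\|\,\P_s(K))$; chaining the two DPI steps concludes the proof.

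The main delicate point I anticipate is the joint structure of conditional Sibson's information, whose variational form minimizes over a joint $Q_{YZ}$ rather than a conditional kernel $Q_{Y|Z}$. I must check carefully that the two-stage deterministic reduction maps the feasible set for the $\bm{Y}^m$-level minimization into (or onto a subset of) the feasible set of the binary-$E$ minimization, so that the DPI chains correctly with the outer infima and no spurious slack is introduced by the $T^m$-conditioning. Should this bookkeeping become cumbersome, an alternative is to work slice-wise in $T^m=t$ via Arimoto's conditional-entropy form and then average using the convexity/concavity properties of $D_\alpha$ recorded in~\cite{LCGR21,Verdu15}.
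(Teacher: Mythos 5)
This theorem is not proved in the present paper: it is imported verbatim from \cite[Thm.~1]{LCGR21}, so there is no in-paper proof to compare against. Your two-stage data-processing argument (first collapsing $\bm{Y}^m$ to the MAP estimate $\hat{K}$, then to the indicator $E=\mathbf{1}_{K=\hat K}$, with the uniformity and independence of $K$ forcing the reference Bernoulli parameter to $1/M$ for every admissible $Q$) is correct and is essentially the standard proof of the cited result; the delicate point you flag is handled by your own observation that the push-forward of any joint $Q_{\bm{Y}^m T^m}$ under $(\bm{y},t)\mapsto(\hat K(\bm{y},t),t)$ is again an admissible joint $Q'_{\hat K T^m}$, so the feasible sets chain correctly with the outer minima (and the final marginalization over $T^m$ is just one more application of the data processing inequality).
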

When \hbox{$\alpha \to 1$}, this bound recovers the previous bound in~\cite{CheriseyGuilleyRioulPiantanida19}. The simulation results in~\cite{LCGR21} show that~\eqref{fano} is tighter as $\alpha$ increases.

In this section, we prove that Fano's inequality for conditional $\alpha$-information becomes an \emph{equality} in the limiting case $\alpha = \infty$. Thus, conditional max-information can accurately characterize the probability of success.
\begin{theorem}[Generalized Fano's Inequality at $\alpha=+\infty$]\label{thm:1} {\color{blueR} For a uniformly distributed secret $K$,}
	\begin{equation}
\begin{split}
	\label{eq:1}
		I_\infty(K;\bm{Y}^m|T^m) &= d_\infty(\P_s(K|\bm{Y}^m,T^m) \| (\P_s(K))) \\
		&=\log (M\P_s)
\end{split}
	\end{equation}
	where $d_\infty(p \|q)=\lim\limits_{\alpha \to \infty} d_\alpha(p\|q)=\log \max\limits_{x,q(x)>0} (p(x)/q(x))$,
	$\P_s=\P_s(K|\bm{Y}^m,T^m)$ is the optimal probability of success, and $\P_s(K)=1/M$ is the corresponding probability of success in the case of blind estimation (without any observation).
\end{theorem}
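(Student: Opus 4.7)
The plan is to evaluate the middle and right-hand sides of~\eqref{eq:1} directly in the limit $\alpha\to\infty$ and verify that each equals $\log(M\P_s)$.

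For the middle quantity I would start from the conditional Sibson formula
$I_\alpha(K;\bm{Y}^m|T^m) = \frac{\alpha}{\alpha-1}\log \E_{\bm{Y}^m, T^m}\langle p_{K|\bm{Y}^m T^m}\|p_{K|T^m}\rangle_\alpha$.
The assumptions $K\perp T^m$ and $K$ uniform on $\mathcal{G}$ give $p_{K|T^m}(k|t^m)=1/M$, so the inner Sibson kernel factorises as
$\langle p_{K|\bm{Y}^m T^m}\|p_{K|T^m}\rangle_\alpha = M^{(\alpha-1)/\alpha}\,\|p_{K|\bm{Y}^m T^m}\|_\alpha$.
On the finite alphabet $\mathcal{G}$, $\|p\|_\alpha \to \max_k p(k)$ as $\alpha\to\infty$, while the prefactor tends to $M$. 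The integrand is uniformly bounded by $1$, so dominated convergence lets the limit pass inside the expectation. Invoking the MAP interpretation $\P_s(K|\bm{Y}^m,T^m)=\E_{\bm{Y}^m T^m}[\max_k p_{K|\bm{Y}^m T^m}]$ from the definition of conditional min-entropy, this yields $I_\infty(K;\bm{Y}^m|T^m)=\log(M\P_s)$.

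For the binary divergence I would use the closed form $d_\infty(p\|q)=\log\max_{x:q(x)>0}(p(x)/q(x))$ given in the statement, with $p=\P_s$ and $q=1/M$ on a two-point alphabet. This yields $\log\max\{M\P_s,\,M(1-\P_s)/(M-1)\}$. Since the MAP rule cannot perform worse than blind estimation, $\P_s\ge 1/M$, so $M\P_s\ge 1$ while the second argument is at most $1$. Hence the maximum is $M\P_s$, and $d_\infty(\P_s\|1/M)=\log(M\P_s)$, matching the previous display.

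The only technical point is the interchange of $\lim_{\alpha\to\infty}$ with $\E_{\bm{Y}^m T^m}[\cdot]$ and with the outer logarithm, which is elementary here because the integrand is uniformly bounded and the finite alphabet avoids any subtlety with the $\alpha$-norm. There is really no obstacle; the content of the theorem lies not in any analytical difficulty but in the structural observation that the Fano bound~\eqref{fano} tightens to an exact equality precisely at $\alpha=\infty$, so that maximal leakage characterises the MAP success probability exactly. This exactness is what motivates focusing on min-entropy and Mrs.\ Gerber's lemma in Section~\ref{sec-MGL}.
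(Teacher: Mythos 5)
Your proposal is correct and follows essentially the same route as the paper: both evaluate the $\alpha\to\infty$ limit of conditional Sibson's information, identify it with $\log(M\P_s)$ via the MAP rule and the uniformity of $K$, and then compute $d_\infty(\P_s\|1/M)=\log(M\P_s)$ from $\P_s\ge 1/M$. The only differences are cosmetic: the paper first establishes a general formula for $I_\infty(X;Y|Z)$ through Lemmas~\ref{lemma1} and~\ref{lemma2} (justifying the interchange by monotone convergence of $\langle p_{X|YZ}\|p_{X|Z}\rangle_\alpha$ in $\alpha$), whereas you exploit the uniform prior immediately via $\langle p\|1/M\rangle_\alpha=M^{(\alpha-1)/\alpha}\|p\|_\alpha$ and invoke dominated convergence; note only that your integrand is uniformly bounded by $M$ rather than by $1$, which changes nothing in the argument.
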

To prove this theorem, we need the explicit expression of conditional max-information.
\begin{proposition}[Conditional Max-Information]
	Assuming $X$ takes values in a finite alphabet, one has
	\begin{equation}
		I_\infty(X;Y|Z) = \log \mathbb{E}_Z \int_y ( \mathop{\max}\limits_{x:p_{X|Z}(x|z)>0} p_{Y|XZ}) ~d\mu_Y \label{eq:I_infty}.
	\end{equation}
\end{proposition}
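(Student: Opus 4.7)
The plan is to start from the paper's chosen definition
$$I_\alpha(X;Y|Z) = \tfrac{\alpha}{\alpha-1} \log \mathbb{E}_{YZ} \langle p_{X|YZ}\|p_{X|Z}\rangle_\alpha$$
and take $\alpha \to \infty$, mimicking the way the unconditional identity~\eqref{max-i} is obtained from Sibson's $\alpha$-information. The target expression in \eqref{eq:I_infty} is essentially a fiberwise version of \eqref{max-i} averaged over $Z$, so one expects everything inside the $\log$ to go through cleanly once the right algebraic manipulation exposes a max.

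First I would rewrite the inner $\alpha$-norm using Bayes' rule. Since $p_{X|YZ}(x|y,z) = p_{X|Z}(x|z)\, p_{Y|XZ}(y|x,z)/p_{Y|Z}(y|z)$, a direct computation gives
$$p_{X|YZ}(x|y,z)^\alpha\, p_{X|Z}(x|z)^{1-\alpha} = \frac{p_{X|Z}(x|z)\, p_{Y|XZ}(y|x,z)^\alpha}{p_{Y|Z}(y|z)^\alpha}.$$
Summing over $x$, taking the $1/\alpha$ power, and multiplying by $p_{Y|Z}(y|z)$ (which comes from the outer expectation $\mathbb{E}_{Y|Z}$) eliminates the $p_{Y|Z}$ factors and yields the clean identity
$$\mathbb{E}_{YZ} \langle p_{X|YZ}\|p_{X|Z}\rangle_\alpha = \mathbb{E}_Z \int_y \Bigl(\sum_{x} p_{X|Z}(x|z)\, p_{Y|XZ}(y|x,z)^\alpha\Bigr)^{1/\alpha} d\mu_Y.$$
This is the key intermediate form, and it already exhibits the standard $\ell^\alpha$-to-$\ell^\infty$ structure.

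Next I would pass to the limit $\alpha \to \infty$. For each fixed $(y,z)$, the weighted $\ell^\alpha$-norm $\bigl(\sum_x p_{X|Z}(x|z)\, p_{Y|XZ}(y|x,z)^\alpha\bigr)^{1/\alpha}$ tends to $\max_{x : p_{X|Z}(x|z)>0} p_{Y|XZ}(y|x,z)$, precisely because the positive weights $p_{X|Z}(x|z)$ select the support and are absorbed by the $1/\alpha$ exponent (here we use finiteness of the alphabet of $X$). The prefactor $\alpha/(\alpha-1)$ tends to $1$, so combining the two limits gives \eqref{eq:I_infty} provided one can exchange $\lim_\alpha$ with $\mathbb{E}_Z \int_y$.

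The only genuine obstacle is justifying that interchange. I would handle it by dominated convergence: the integrand is monotone nondecreasing in $\alpha$ for $\alpha \geq 1$ (by the usual monotonicity of weighted $\ell^\alpha$-norms when the total weight is~$1$, or alternatively by sandwiching between $\alpha$ and $2\alpha$), and it is bounded above by $\sup_x p_{Y|XZ}(y|x,z)$, which is $\mu_Y$-integrable whenever $I_\infty(X;Y|Z)$ is finite (this is exactly the condition under which \eqref{eq:I_infty} is meaningful). Monotone convergence then suffices, and the proposition follows.
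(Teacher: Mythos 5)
Your proposal is correct and follows essentially the same route as the paper: the Bayes-rule rewriting of $p_{Y|Z}\langle p_{X|YZ}\|p_{X|Z}\rangle_\alpha$ into the weighted $\ell^\alpha$ form $\bigl(\sum_x p_{X|Z}\,p_{Y|XZ}^\alpha\bigr)^{1/\alpha}$ and its pointwise limit to the supported max is exactly the paper's Lemma~\ref{lemma1} (second method), and the interchange of limit with $\mathbb{E}_Z\int_y$ via monotonicity in $\alpha$ and monotone convergence is the paper's Lemma~\ref{lemma2}.
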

This result easily follows from the following Lemmas~\ref{lemma1} and~\ref{lemma2}, which are proved in Appendices~\ref{proof1} and~\ref{proof2} respectively. 
\textcolor{blueR}{In~\cite{Issa20}, conditional maximal leakage is defined as a maximum over $Z$, while our  conditional max-information is averaged over $Z$---which is less than or equal to the conditional maximal leakage of~\cite{Issa20}.}
\begin{lemma}\label{lemma1}
	Given any fixed $y,z$, we have
	\begin{equation}
		\lim_{\alpha \to \infty} ~p_{Y|Z} \cdot \langle p_{X|YZ}\|p_{X|Z}\rangle_\alpha = \max_{x:p_{X|Z}(x|z)>0} p_{Y|XZ}.
	\end{equation}
\end{lemma}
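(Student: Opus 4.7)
The plan is to apply Bayes' rule to the posterior $p_{X|YZ}$ and thereby recast the weighted $\alpha$-divergence as a weighted $\ell_\alpha$-norm whose limit as $\alpha\to\infty$ is a weighted essential supremum.

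First, I would use the chain rule $p_{X|YZ}(x|y,z) = p_{Y|XZ}(y|x,z)\, p_{X|Z}(x|z)/p_{Y|Z}(y|z)$, valid whenever $p_{Y|Z}(y|z)>0$ (the set where $p_{Y|Z}(y|z)=0$ is a null set and does not affect the lemma). Substituting into the definition
$$\langle p_{X|YZ}\|p_{X|Z}\rangle_\alpha^\alpha = \sum_x p_{X|YZ}(x|y,z)^\alpha\, p_{X|Z}(x|z)^{1-\alpha},$$
the $p_{X|Z}$ factors telescope via $p_{X|Z}^\alpha\cdot p_{X|Z}^{1-\alpha} = p_{X|Z}$, and an overall factor $p_{Y|Z}(y|z)^{-\alpha}$ pulls out. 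Taking the $\alpha$-th root and multiplying by $p_{Y|Z}(y|z)$, I obtain the clean identity
$$p_{Y|Z}\cdot\langle p_{X|YZ}\|p_{X|Z}\rangle_\alpha = \Bigl(\sum_x p_{Y|XZ}(y|x,z)^\alpha\, p_{X|Z}(x|z)\Bigr)^{1/\alpha}.$$

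Second, I would recognize the right-hand side as the $L^\alpha$-norm of the function $x\mapsto p_{Y|XZ}(y|x,z)$ with respect to the probability measure $p_{X|Z}(\cdot|z)$ on the finite alphabet of $X$. It is a standard fact that, for a nonnegative function $f$ and a probability measure $\mu$ on a finite set, $\|f\|_{L^\alpha(\mu)}\to\|f\|_{L^\infty(\mu)}=\max_{x:\mu(x)>0}f(x)$ as $\alpha\to\infty$. Applying this with $\mu=p_{X|Z}(\cdot|z)$ and $f(x)=p_{Y|XZ}(y|x,z)$ yields precisely $\max_{x:p_{X|Z}(x|z)>0}p_{Y|XZ}(y|x,z)$, which is the claim.

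The main substantive step is the algebraic simplification via Bayes' rule; once that is carried out, the limit is a textbook $L^\alpha\!\to\!L^\infty$ convergence on a finite space. I do not anticipate a serious obstacle: the only bookkeeping concern is that the support condition $x:p_{X|Z}(x|z)>0$ appearing on the right-hand side is exactly the support of the measure that weights the $\ell_\alpha$-norm, so it is preserved automatically. The degenerate case $p_{Y|Z}(y|z)=0$ need not be treated here since the lemma is later integrated against $p_{Y|Z}$ in Lemma~\ref{lemma2}, where such points contribute nothing.
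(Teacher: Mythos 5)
Your proposal is correct and follows essentially the same route as the paper's own proof (Method~2 in Appendix~\ref{proof1}): the same Bayes'-rule simplification yielding $\bigl(\sum_x p_{Y|XZ}^{\alpha}\,p_{X|Z}\bigr)^{1/\alpha}$, followed by the $L^\alpha\to L^\infty$ limit over the support of $p_{X|Z}(\cdot|z)$. The only cosmetic difference is that you invoke the standard $L^\alpha(\mu)\to L^\infty(\mu)$ convergence on a finite space directly, whereas the paper reproves it with an explicit $\varepsilon$--squeeze argument.
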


\begin{lemma}\label{lemma2}
	\begin{align}
		\lim_{\alpha \to \infty} \log & ~\E_{YZ} \langle p_{X|YZ}\|p_{X|Z}\rangle_\alpha \notag \\  &= \log \E_{Z} \int_y \lim_{\alpha \to \infty} ~p_{Y|Z} \cdot \langle p_{X|YZ}\|p_{X|Z}\rangle_\alpha.
	\end{align}
\end{lemma}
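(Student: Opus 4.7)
My plan is to swap the limit $\alpha\to\infty$ past the expectation and integral, and then invoke continuity of $\log$. First I would rewrite the left-hand integrand in its conditional form
\[\E_{YZ}\langle p_{X|YZ}\|p_{X|Z}\rangle_\alpha = \E_Z \int_y p_{Y|Z}(y|z)\,\langle p_{X|YZ}\|p_{X|Z}\rangle_\alpha\,d\mu_Y(y),\]
so that the claim reduces to exchanging $\lim_{\alpha\to\infty}$ with $\E_Z$ and with $\int_y d\mu_Y$, the outer $\log$ being handled at the end by continuity on $(0,\infty)$.

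Next I would establish monotonicity of the integrand in $\alpha$. Writing
\[\langle p_{X|YZ}\|p_{X|Z}\rangle_\alpha^{\,\alpha} \;=\; \sum_{x} p_{X|Z}(x|z)\bigl(p_{X|YZ}(x|y,z)\big/p_{X|Z}(x|z)\bigr)^{\alpha},\]
one recognizes the $L^\alpha$ norm of the ratio $p_{X|YZ}/p_{X|Z}$ taken against the probability measure $p_{X|Z}(\cdot|z)$. Since $L^\alpha$ norms against a probability measure are non-decreasing in $\alpha$ (an immediate consequence of Jensen's inequality applied to $t\mapsto t^{\beta/\alpha}$ for $\beta>\alpha\geq 1$), the integrand $p_{Y|Z}\,\langle p_{X|YZ}\|p_{X|Z}\rangle_\alpha$ is non-decreasing in $\alpha$ for each fixed $(y,z)$, and its pointwise limit is the one identified in Lemma~\ref{lemma1}.

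The monotone convergence theorem, applied to the product measure $P_Z\otimes\mu_Y$ (or successively to each integral), then justifies interchanging $\lim_{\alpha\to\infty}$ with $\E_Z\int_y d\mu_Y$. Taking $\log$ on both sides preserves equality by continuity on $(0,\infty)$, which is legitimate because the limit is at least $1$: indeed $\int_y \max_x p_{Y|XZ}(y|x,z)\,d\mu_Y \geq \int_y p_{Y|Z}(y|z)\,d\mu_Y = 1$, since the maximum over $x$ of a convex combination exceeds the combination itself.

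The only subtlety, rather than a real obstacle, is handling values of $y$ for which $p_{Y|Z}(y|z)=0$: with the convention $0\cdot\infty = 0$ such points contribute nothing, consistent with the statement of Lemma~\ref{lemma1}. If one prefers to avoid monotonicity, the same conclusion follows by dominated convergence using the uniform bound $p_{Y|Z}\,\langle p_{X|YZ}\|p_{X|Z}\rangle_\alpha \leq \sum_{x:p_{X|Z}(x|z)>0} p_{Y|XZ}(y|x,z)$, whose integral over $y$ equals $|\mathrm{supp}(p_{X|Z}(\cdot|z))|<\infty$, providing an $\alpha$-uniform integrable majorant.
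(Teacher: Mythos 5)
Your proof is correct and takes essentially the same route as the paper: both arguments hinge on the fact that $\langle p_{X|YZ}\|p_{X|Z}\rangle_\alpha$ is nondecreasing in $\alpha$ pointwise in $(y,z)$ and then invoke the monotone convergence theorem, the only difference being that you derive this monotonicity directly from Jensen's inequality for $L^\alpha$ norms against a probability measure, whereas the paper cites the known monotonicity of $\tfrac{\alpha-1}{\alpha}D_\alpha$ in $\alpha$. Your additional care with the $p_{Y|Z}=0$ points, the positivity of the limit before applying $\log$, and the dominated-convergence alternative are welcome but not a departure from the paper's argument.
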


\begin{proof}[Proof of Theorem~\ref{thm:1}]
	Under the MAP rule, the probability of success writes
	\begin{align}
		\mathbb{P}_s &= \mathbb{E}_{\bm{Y}^mT^m} (\max_{k} ~p_{K|\bm{Y}^m,T^m}) \notag \\
		&= \mathbb{E}_{T^m} \int_{\bm{y}^m}(\max_{k} ~p_{\bm{Y}^m|K,T^m}p_{K|T^m}) d\mu_{\bm{Y}^m} \label{ps}.
	\end{align}
Recall $K$ is uniformly distributed and independent from $T^m$. Therefore, \eqref{ps} becomes
	\begin{equation}\label{P_s}
		\mathbb{P}_s= \frac{1}{M} \cdot \mathbb{E}_{T^m} \int_{\bm{y}^m} \bigl(\max_{k} ~p_{\bm{Y}^m|K,T^m} \bigr) d\mu_{\bm{Y}^m}.
	\end{equation}
Combining~\eqref{eq:I_infty} and~\eqref{P_s} we have $I_\infty(K;\bm{Y}^m|T^m) = \log(M \P_s)$.
	Since $\P_s\geq 1/M$, one has $\P_s\cdot M \geq (1-\P_s) \cdot M/(M-1)$ and
$d_\infty(\P_s(K|\bm{Y}^m,T^m) \| (\P_s(K))) =  \log (M\P_s)$,
	which proves~\eqref{eq:1}.
\end{proof}

\vspace{-.8em}
\subsection{Linear Bound Using \textcolor{blueR}{Maximal Leakage} $I_\infty(X;\mathbf{Y})$}

Evaluating $I_\infty(K;\bm{Y}^m|T^m)$ directly turns out to be cumbersome (see \hbox{Remark}~\ref{con-i-remark} below). Instead we use the unconditional max-information measure, \textcolor{blueR}{i.e., maximal leakage}  $I_\infty(X;\mathbf{Y})$  to bound the probability of success, which is linear in the number $m$ of measurements:
\begin{theorem}[Linear Bound]	\label{thm:linear-bound}
	\begin{equation}
		\log(M\P_s) \leq m I_\infty (X;\bm{Y}).
	\end{equation}
	\label{thm:linearBound}
\end{theorem}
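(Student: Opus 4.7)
The plan is to combine Theorem~\ref{thm:1}, which gives $I_\infty(K;\bm{Y}^m|T^m) = \log(M\P_s)$, with a single-letterization step using the explicit formula for conditional max-information stated just before the theorem. The target inequality thus reduces to showing
\[
I_\infty(K;\bm{Y}^m|T^m) \leq m\, I_\infty(X;\bm{Y}).
\]

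First I would expand the left-hand side as
\[
I_\infty(K;\bm{Y}^m|T^m) = \log \E_{T^m}\!\int_{\bm{y}^m} \max_{k}\, p_{\bm{Y}^m|K,T^m}(\bm{y}^m|k,t^m)\, d\mu_{\bm{Y}^m}.
\]
Because the side channel is memoryless and the $m$ queries are conditionally independent given $(K,T^m)$---with $\bm{Y}_i$ depending on $(K,T^m)$ only through the sensitive variable $X_i=X(K,T_i)$---this conditional density factorizes as $\prod_{i=1}^{m} p_{\bm{Y}|X}(\bm{y}_i|X(k,t_i))$. I would then apply the elementary inequality $\max_k \prod_i a_i(k) \leq \prod_i \max_k a_i(k)$ and use Tonelli to exchange the integral over $\bm{y}^m$ with the product over $i$.

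The structural fact I would invoke next is that, since $K$ is uniform on $\mathcal{G}$, independent of $T$, and $X$ is also uniform on $\mathcal{G}$, for each fixed $t$ the map $k \mapsto X(k,t)$ is a bijection of $\mathcal{G}$ (surjectivity is forced by uniformity of $X\mid T=t$, and since $|\mathcal{G}|=M$ this is automatically a permutation). Consequently $\max_k p_{\bm{Y}|X}(\bm{y}_i|X(k,t_i)) = \max_x p_{\bm{Y}|X}(\bm{y}_i|x)$, and the integrand ceases to depend on $t^m$. The expectation over $T^m$ trivially drops, and the remaining product of $m$ identical integrals equals $\bigl(\int_{\bm{y}} \max_x p_{\bm{Y}|X}(\bm{y}|x)\, d\mu_{\bm{Y}}\bigr)^m$, whose logarithm is precisely $m\, I_\infty(X;\bm{Y})$ by Definition~\ref{maximal}.

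The main obstacle I anticipate is not computational but conceptual: the bijection step, i.e.\ observing that the uniformity of $X$ given $T$ forces $k \mapsto X(k,t)$ to be a permutation of $\mathcal{G}$, so that the max over $k$ coincides with the max over all $x \in \mathcal{G}$. Without this observation one would only get a bound whose right-hand side involves a max over the support of $X\mid K,T$, which need not factor into $m\, I_\infty(X;\bm{Y})$. Everything else is a clean application of the factorization of the memoryless channel and the max-of-product versus product-of-max inequality.
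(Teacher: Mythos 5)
Your proof is correct, but it follows a genuinely different route from the paper's. The paper first bounds $I_\infty(K;\bm{Y}^m|T^m)\leq I_\infty(K,T^m;\bm{Y}^m)$ (by replacing the average over $T^m$ with a max), then invokes the data processing inequality for Sibson's $\alpha$-information along the Markov chain $(K,T^m)\leftrightarrow X^m\leftrightarrow\bm{Y}^m$, uses the additivity $I_\alpha(X^m;\bm{Y}^m)=m\,I_\alpha(X;\bm{Y})$ for the i.i.d.\ pair sequence, and finally lets $\alpha\to\infty$. You instead work directly at $\alpha=\infty$ with the explicit integral formula for conditional max-information, factorize the memoryless channel, and apply $\max_k\prod_i a_i(k)\leq\prod_i\max_k a_i(k)$ together with Tonelli. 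Your route is more elementary and self-contained: it avoids the DPI and additivity lemmas and the limiting argument, and it does not even require the marginal i.i.d.-ness of $X^m$ that the paper needs for the additivity step. Two small remarks. First, the ``main obstacle'' you identify is not actually an obstacle for the inequality you need: since $x(k,t_i)\in\mathcal{G}$ and $X$ is uniform (so the sup in Definition~\ref{maximal} ranges over all of $\mathcal{G}$), you immediately get $\max_k p_{\bm{Y}|X}(\bm{y}_i|x(k,t_i))\leq\max_{x}p_{\bm{Y}|X}(\bm{y}_i|x)$ as a max over a subset versus the whole group; bijectivity of $k\mapsto X(k,t)$ would only be needed to claim equality at that step. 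Second, your justification of that bijectivity assumes $X\mid T=t$ is uniform, which is slightly stronger than the paper's stated assumption (only marginal uniformity of $X$); since the inequality does not need it, this is harmless, but you should not present it as a load-bearing step.
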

\vspace{-1.5em}
\begin{proof}
	By Definition~\ref{maximal},
	\begin{equation}
		I_\infty (K,T^m;\bm{Y}^m) = \log \int_{\bm{y}^m} \max\limits_{k,t^m} ~p_{\bm{Y}^m|K,T^m} d\mu_{\bm{Y}^m}.
	\end{equation}
	Because $\max\limits_{k,t^m} ~p_{\bm{Y}^m|K,T^m} \geq
	\mathbb{E}_{T^m}~(\max_{k}~p_{\bm{Y}^m|K,T^m})
	$, by~\eqref{eq:1} and~\eqref{eq:I_infty} we have
	\begin{equation}
		I_\infty (K,T^m;\bm{Y}^m) \geq I_\infty (K;\bm{Y}^m|T^m) = \log (M\P_s).
	\end{equation}
	Because $(K,T^m) \leftrightarrow X^m \leftrightarrow {\bm{Y}}^m $ forms a Markov chain, using the {\color{blueR} data processing inequality (DPI)} for Sibson's $\alpha$-information~\cite{Polyanskiy2010,rioul2021primer} we have
	\begin{equation}\label{markov}
		I_\alpha (K,T^m;{\bm{Y}}^m ) \leq I_\alpha (X^m;{\bm{Y}}^m ).
	\end{equation}
	Also, when $T^m$ is not observed, each component of $X^m$ is i.i.d., and since the side-channel is memoryless, $(X^m;{\bm{Y}}^m)$ is an i.i.d. sequence. It easily follows from the definition that
	\begin{equation}\label{linear}
		I_\alpha (X^m;{\bm{Y}}^m) = m I_\alpha (X;\bm{Y}).
	\end{equation}
	Letting $\alpha \to \infty$ in~\eqref{markov} and~\eqref{linear} we have $I_\infty (K,T^m;\bm{Y}^m) \leq m I_\infty (X;\bm{Y})$.
\end{proof}

\begin{remark}\label{con-i-remark}
	For conditional $\alpha$-information we have the inequality  $I_\alpha(K;\bm{Y}^m|T^m) \leq I_\alpha(X^m;\bm{Y}^m|T^m) $ similar to~\eqref{markov}. However, one does not have an equality similar to~\eqref{linear} when $T^m$ is observed.
\end{remark}\vspace{-1em}
\textcolor{blueR}{\begin{remark}
	This proof cannot use the result in~\cite[Theorem~1]{Issa20} because in this theorem $\bm{Y^m}$ is not on a finite alphabet.
	What's more, if we use Definition 1 and Theorem 1 in~\cite{Issa20} we will have 
	\begin{equation}
		I_{\infty}(X^m;\bm{Y}^m, T^m) \geq \log (M\cdot \P_s(K|\bm{Y}^m,T^m))
	\end{equation}
	but $I_{\infty}(X^m;\bm{Y}^m)$ is less than $I_{\infty}(X^m;\bm{Y}^m, T^m)$.
\end{remark}}

\section{Mrs. Gerber's Lemma for Min-Entropy in Any Finite Abelian Group}\label{sec-MGL}

To benefit from Theorem~\ref{thm:linear-bound} it remains to upper bound $I_\infty (X;\bm{Y})$.
Since $X$ is uniformly distributed, it is easily seen from the definition that $I_\infty (X;\bm{Y})=\log M - H_\infty (X|\bm{Y})$. Thus, it remains to lower bound the conditional min-entropy $H_\infty (X|\bm{Y})$. This can be seen as an extension of Mrs. Gerber's lemma to min-entropy in finite additive groups.

\subsection{Mrs. Gerber's Lemma for Two Random Variables} 
Wyner and Ziv~\cite{Wyner1973ATO} lower bounded the entropy of a sum of binary random variables with the entropies of each summand. This is known as Mrs. Gerber's lemma.
\begin{theorem}[Mrs. Gerber's Lemma~\cite{Wyner1973ATO}] Let $X_0,X_1$ be two independent $\mathbb{Z}_2$-valued random variables with side information $\mathbf{Y}=(Y_0,Y_1)$ and sensitive bit $X=X_0\oplus X_1$. Then
	\begin{equation}
		H(X|\mathbf{Y}) \geq h( h^{-1}(H(\!X_0 |Y_0)) \star  h^{-1}(H(X_1 |Y_1)))
	\end{equation}
	where $h(p) = - p \log p - \bar{p} \log \bar{p}$, $a\star b= a \bar{b} + \bar{a} b$ and $\bar x =1-x$.
	\label{thm:originalMGL}
\end{theorem}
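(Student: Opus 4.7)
The plan is to reduce the inequality to a pointwise convexity statement and then invoke Jensen's inequality twice. First, I would exploit the implicit assumption that $(X_0,Y_0)$ and $(X_1,Y_1)$ are independent (each share leaks through its own side channel), so that conditionally on $\mathbf{Y}=(y_0,y_1)$ the shares $X_0,X_1$ remain independent Bernoulli with parameters $p_i(y_i):=\P(X_i=1\mid Y_i=y_i)$. Hence, conditionally on $\mathbf{Y}$, the bit $X=X_0\oplus X_1$ is Bernoulli with parameter $p_0(Y_0)\star p_1(Y_1)$, so that
\begin{equation*}
H(X\mid\mathbf{Y}) = \E_{\mathbf{Y}}\bigl[h(p_0(Y_0)\star p_1(Y_1))\bigr],\qquad H(X_i\mid Y_i)=\E_{Y_i}[h(p_i(Y_i))].
\end{equation*}

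Using the symmetries $h(p)=h(\bar p)$ and $p\star q=\bar p\star\bar q$, I may assume without loss of generality that $p_i(y_i)\in[0,\tfrac12]$, so that $h^{-1}:[0,\log 2]\to[0,\tfrac12]$ is unambiguously defined. The entire argument then rests on the classical \emph{Mrs. Gerber convexity lemma}: for every fixed $q\in[0,\tfrac12]$, the map $u\mapsto h\bigl(h^{-1}(u)\star q\bigr)$ is convex on $[0,\log 2]$. Granting this, Jensen's inequality applied first over $Y_0$ (with $q=p_1(y_1)$) and then over $Y_1$ (with $q=h^{-1}(H(X_0\mid Y_0))$, which itself lies in $[0,\tfrac12]$) yields
\begin{align*}
H(X\mid\mathbf{Y}) &\geq \E_{Y_1}\bigl[h\bigl(h^{-1}(H(X_0\mid Y_0))\star p_1(Y_1)\bigr)\bigr]\\
&\geq h\bigl(h^{-1}(H(X_0\mid Y_0))\star h^{-1}(H(X_1\mid Y_1))\bigr),
\end{align*}
which is exactly the claimed bound.

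The main obstacle is establishing the convexity lemma itself; the reduction above is otherwise mechanical. I would prove it by direct differentiation: setting $p=h^{-1}(u)$ so that $du/dp=\log(\bar p/p)$, a second-derivative computation shows that non-negativity of $(d^2/du^2)\,h(p\star q)$ reduces to an elementary comparison between $\log(\overline{p\star q}/(p\star q))$ and $(1-2q)^2\log(\bar p/p)$ on $(0,\tfrac12]$, which is verified by monotonicity in $p$. This is the technical heart of Wyner and Ziv's original argument; everything else is bookkeeping around two applications of Jensen's inequality.
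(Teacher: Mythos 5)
The paper does not actually prove Theorem~\ref{thm:originalMGL}: it is quoted as a classical result with a citation to Wyner and Ziv, so there is no internal proof to compare against. Your proposal is, in substance, the standard argument for the two-sided form of Mrs.\ Gerber's lemma, and its structure is sound: the independence of the pairs $(X_0,Y_0)$ and $(X_1,Y_1)$ (which does hold in the paper's framework, where shares are drawn independently and the side channel is memoryless) gives $H(X|\mathbf{Y})=\E[h(p_0(Y_0)\star p_1(Y_1))]$; the symmetries $h(p)=h(\bar p)$ and $\overline{p\star q}=\bar p\star q$ justify restricting to $[0,\tfrac12]$; and two applications of Jensen's inequality in the correct direction (convex $f$ gives $\E f\geq f(\E)$) reduce everything to the convexity of $u\mapsto h(h^{-1}(u)\star q)$ for fixed $q\in[0,\tfrac12]$. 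That convexity lemma is the entire technical content of Wyner--Ziv's result, and you only sketch it; moreover the reduction you state is slightly imprecise --- writing $s=p\star q$, the second-derivative condition is $(1-2q)\,s\bar s\,\log\tfrac{\bar s}{s}\geq(1-2q)^2\,p\bar p\,\log\tfrac{\bar p}{p}$ (the factors $p\bar p$ and $s\bar s$ coming from $h''$ do not cancel), equivalently the monotonicity in $p$ of $(1-2q)h'(s)/h'(p)$. So the proof is acceptable as an outline but incomplete at its core. It is also worth contrasting your route with how the paper handles its own min-entropy analogue (Theorem~\ref{thm:gerber-infty}): there the unconditional inequality is established first by majorization, rearrangement and Schur-convexity arguments on the probability mass functions, and only afterwards is concavity of the resulting piecewise-linear bound invoked to pass to the conditional case via Jensen. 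Your calculus-based convexity approach is specific to the binary Shannon-entropy setting, whereas the paper's majorization technique is what generalizes to arbitrary finite Abelian groups and to min-entropy.
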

\vspace*{-1.5ex}

Jog and Anatharam~\cite{Jog2013TheEPI} extended Mrs. Gerber's lemma to additive groups of order $2^n$.
Hirche~\cite{Hirche2020} extended Mrs. Gerber's lemma for binary random variables to the case of Rényi entropies. In particular for min-entropy, one has equality:
\begin{theorem}[Christoph Hirche~{\cite[Lem.~IV.7]{Hirche2020}}] Let $X_0,X_1$ be two independent $\mathbb{Z}_2$-valued random variables with side information $\mathbf{Y} = (Y_0,Y_1)$ and $X=X_0\oplus X_1$. Then
	\begin{equation}
		H_\infty(X|\mathbf{Y}) = h_\infty( h_\infty^{-1}(H_\infty(X_0 |Y_0)) \star  h_\infty^{-1}(H_\infty(\!X_1 |Y_1)))
\vspace*{-1ex}
	\end{equation}
	where $h_\infty(p)= - \log \max \{p,\bar{p}\}$.
	\label{thm:inftyMGL}
\end{theorem}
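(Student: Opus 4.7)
The plan is to reduce the claimed identity to a statement about MAP success probabilities and then to evaluate it directly using the product structure of the problem. On $[0, 1/2]$, the function $h_\infty$ is strictly decreasing with inverse $h_\infty^{-1}(H) = 1 - 2^{-H}$, and by definition of the conditional min-entropy one has $2^{-H_\infty(X_i|Y_i)} = \P_s(X_i|Y_i)$. Consequently, $p_i := h_\infty^{-1}(H_\infty(X_i|Y_i)) = 1 - \P_s(X_i|Y_i) = \P_e(X_i|Y_i) \in [0, 1/2]$. Since $p_0, p_1 \leq 1/2$ implies $p_0 \star p_1 \leq 1/2$, the right-hand side of the theorem equals $-\log\bigl((1-p_0)(1-p_1) + p_0 p_1\bigr)$. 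The identity to establish thus reduces to
\begin{equation*}
\P_s(X|\mathbf{Y}) = \P_s(X_0|Y_0)\,\P_s(X_1|Y_1) + \P_e(X_0|Y_0)\,\P_e(X_1|Y_1).
\end{equation*}

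Next I exploit the structural assumptions of the paper: the shares $X_0, X_1$ are independent, and the side channel is memoryless, so $(X_0, Y_0)$ and $(X_1, Y_1)$ are independent pairs; in particular, given $\mathbf{Y} = (y_0, y_1)$, the variables $X_0, X_1$ remain independent with posteriors $p_{X_i|Y_i}(\cdot|y_i)$. Writing $a_i := p_{X_i|Y_i}(0|y_i)$, a direct computation gives $p_{X|\mathbf{Y}}(0|y_0,y_1) = a_0 a_1 + (1-a_0)(1-a_1)$ and $p_{X|\mathbf{Y}}(1|y_0,y_1) = a_0(1-a_1) + (1-a_0)a_1$, whose difference factorizes cleanly as $(1 - 2a_0)(1 - 2a_1)$. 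Setting $s_i(y_i) := \max\{a_i, 1-a_i\}$ and $e_i(y_i) := 1 - s_i(y_i)$, the target expression $s_0 s_1 + e_0 e_1$ is invariant under the relabelling $a_i \leftrightarrow 1 - a_i$, so WLOG I may assume $a_0, a_1 \geq 1/2$; the sign of $(1-2a_0)(1-2a_1)$ is then nonnegative, forcing the MAP decision to be $x^* = 0$ and yielding
\begin{equation*}
\max_x p_{X|\mathbf{Y}}(x|y_0,y_1) = s_0(y_0)\,s_1(y_1) + e_0(y_0)\,e_1(y_1).
\end{equation*}

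Finally, I take expectation over $\mathbf{Y}$. Independence of $Y_0$ and $Y_1$ (inherited from independence of the shares and memorylessness of the channel) allows each of the two products to split:
\begin{equation*}
\P_s(X|\mathbf{Y}) = \E[s_0(Y_0)]\,\E[s_1(Y_1)] + \E[e_0(Y_0)]\,\E[e_1(Y_1)] = \P_s(X_0|Y_0)\,\P_s(X_1|Y_1) + \P_e(X_0|Y_0)\,\P_e(X_1|Y_1),
\end{equation*}
which is exactly the reduction above, completing the argument.

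The main (modest) technical point is the closed-form identification of the pointwise posterior maximum, which hinges on the factorization of $p_{X|\mathbf{Y}}(0) - p_{X|\mathbf{Y}}(1)$ into $(1-2a_0)(1-2a_1)$. The conceptual point that turns Mrs.\ Gerber's lemma into an \emph{equality} rather than an inequality at $\alpha = \infty$ is that the two summands in the pointwise maximum are \emph{products} whose factors depend on disjoint coordinates of $\mathbf{Y}$, so the expectation distributes without loss; this is a feature special to min-entropy (where the max is attained at a single argument) and to binary XOR (where the posterior has only two atoms).
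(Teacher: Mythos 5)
Your proof is correct. Note first that the paper itself gives no proof of this statement: it is quoted verbatim from Hirche (Lem.~IV.7 of the cited reference), so there is no in-paper argument to compare against line by line. Your reduction is sound at every step: $h_\infty^{-1}(H_\infty(X_i|Y_i)) = 1-\P_s(X_i|Y_i)=\P_e(X_i|Y_i)\in[0,1/2]$, the closure of $[0,1/2]$ under $\star$, the conditional-independence factorization $p_{X|\mathbf{Y}}(0)-p_{X|\mathbf{Y}}(1)=(1-2a_0)(1-2a_1)$ identifying the pointwise MAP probability as $s_0s_1+e_0e_1$, and the splitting of the expectation using $Y_0\perp Y_1$. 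The one hypothesis you should make explicit is that you need independence of the \emph{pairs} $(X_0,Y_0)\perp(X_1,Y_1)$, not merely $X_0\perp X_1$ as literally written in the theorem statement; you correctly import this from the paper's memoryless-channel framework, and without it the equality can fail. It is worth contrasting your route with the paper's proof of its own generalization (Theorem~3): there the upper bound is obtained by rearrangement and Schur-convexity/majorization arguments valid in any finite Abelian group, and for $M=2$ that bound collapses to $pq+(1-p)(1-q)$ with equality because a binary pmf with maximum $p$ is necessarily $(p,1-p)$, leaving no slack in the majorization step. Your direct computation buys the equality immediately by exploiting the two-atom posterior and the product structure across disjoint coordinates of $\mathbf{Y}$ — exactly the feature you identify in your closing remark — at the cost of not generalizing beyond $\mathbb{Z}_2$.
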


In this section, Mrs. Gerber's lemma is extended for the min-entropy in any additive finite group:
\begin{theorem} 
	Let $X_0,X_1$ be two independent $\mathcal{G}$-valued random variables with side information $\mathbf{Y}=(Y_0,Y_1)$ and sensitive variable $X=X_0\oplus X_1$. 
	Then for $k = \max \{  \lfloor p^{-1} \!\rfloor , \lfloor q^{-1} \!\rfloor \}$, one has the optimal bound
	\begin{equation}
		\exp(\!-\!H_\infty\!(X|\mathbf{Y})) \!\leq\!
		\begin{cases}
			k p q +\!(1\!-\!kp)(1\!-\!kq) & \hspace{-1em} \text{ if } \!\frac{1}{k+1} \leq p,q \leq\! \frac{1}{k} \\
			\min \{ p,q\} & \hspace{-1em} \text{otherwise,}
		\end{cases}
		\label{eq-condcase}
	\end{equation}
	where $p = \exp(\!-\!H_\infty(X_0|Y_0))$ and $q = \exp(\!-\!H_\infty(X_1|Y_1))$.
	\label{thm:gerber-infty}
\end{theorem}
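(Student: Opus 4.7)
The plan is to unpack the conditional min-entropy as $\exp(-H_\infty(X|\mathbf{Y})) = \mathbb{E}_{\mathbf{Y}} \|p_{X_0|Y_0} * p_{X_1|Y_1}\|_\infty$, where $*$ denotes the group convolution on $\mathcal{G}$ arising from $X = X_0 \oplus X_1$ and the conditional independence of the shares given $\mathbf{Y}$. The proof then proceeds in two phases: (i) a pointwise bound $\|P_0 * P_1\|_\infty \leq F(\|P_0\|_\infty, \|P_1\|_\infty)$ for each realization $(y_0, y_1)$, where $F$ is the piecewise function appearing in~\eqref{eq-condcase}; and (ii) an averaging argument, based on concavity and independence, lifting the pointwise bound to $\exp(-H_\infty(X|\mathbf{Y})) \leq F(p, q)$.

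For Phase~(i), I would fix $(y_0, y_1)$, write $P_0, P_1$ for the conditional laws, and set $p' = \|P_0\|_\infty$, $q' = \|P_1\|_\infty$, $k' = \lfloor 1/p'\rfloor$. For each fixed $x$, maximizing $(P_0 * P_1)(x) = \sum_{x_0} P_0(x_0) P_1(x \ominus x_0)$ over probability vectors $P_0$ with $P_0 \leq p'$ componentwise is a linear program whose optimum places mass $p'$ on the $k'$ indices $x_0$ maximizing $P_1(x \ominus x_0)$ and the residual $1 - k' p'$ on the next. Since $x_0 \mapsto P_1(x \ominus x_0)$ is a permutation of $P_1$, the resulting bound is independent of $x$ and depends only on the decreasing rearrangement $P_1^{(1)} \geq P_1^{(2)} \geq \cdots$, giving $(P_0 * P_1)(x) \leq p' \sum_{i=1}^{k'} P_1^{(i)} + (1 - k' p') P_1^{(k'+1)}$. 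A second linear program on the sorted values of $P_1$ under the constraint $P_1^{(1)} \leq q'$ then produces exactly $F(p', q')$, after separating the subcases $q' \leq 1/k'$ vs.\ $q' > 1/k'$ and $q' \geq 1/(k'+1)$ vs.\ $q' < 1/(k'+1)$.

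For Phase~(ii), I would define $M_i = \|p_{X_i|Y_i}(\cdot|Y_i)\|_\infty$, so that $\mathbb{E} M_0 = p$, $\mathbb{E} M_1 = q$. Since $X_0 \perp X_1$ and the side-channel is memoryless, $Y_0 \perp Y_1$, hence $M_0 \perp M_1$. The pointwise bound then gives $\exp(-H_\infty(X|\mathbf{Y})) \leq \mathbb{E}\, F(M_0, M_1)$. I would next check that for each fixed $m_1$ with $k_1 = \lfloor 1/m_1\rfloor$, the function $F(\cdot, m_1)$ is piecewise linear on $[0, 1]$ with three pieces $[0, 1/(k_1+1)]$, $[1/(k_1+1), 1/k_1]$, $[1/k_1, 1]$ having respective slopes $1$, $k_1((k_1+1)m_1 - 1) \in [0, 1]$, and $0$; these slopes are non-increasing, so $F(\cdot, m_1)$ is concave, and by symmetry so is $F(m_0, \cdot)$. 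Iterating Jensen's inequality, valid thanks to the independence of $M_0$ and $M_1$, then yields $\mathbb{E}\, F(M_0, M_1) \leq F(p, q)$.

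The main obstacle will be the concavity check in Phase~(ii): showing that the case-A quadratic $kpq + (1-kp)(1-kq)$ and the case-B piece $\min(p, q)$ fit together continuously with non-increasing slope across the breakpoints $1/(k_1+1)$ and $1/k_1$. Once this is in hand, the iterated Jensen argument is immediate. Optimality of the bound follows by a trivial extremal construction: taking $Y_0, Y_1$ deterministic and $P_0, P_1$ to be the rearrangements achieving equality in the Phase~(i) linear programs (e.g., supported on nested cosets when $p = q = 1/k$).
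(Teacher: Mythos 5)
Your proposal is correct and follows essentially the same route as the paper's proof: reduce to the extremal ``flat'' conditional laws $(p,\dots,p,1-kp,0,\dots,0)$ via rearrangement of the convolution (your two nested linear programs are just the paper's rearrangement-inequality-plus-Schur-convexity/majorization argument in LP clothing), then lift to the conditional case by checking that the piecewise-linear bound has non-increasing slopes $1$, $k((k+1)q-1)$, $0$ and applying Jensen twice using the independence of the two traces. The concavity check you flag as the main obstacle is exactly the computation the paper carries out, and it goes through without difficulty.
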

\begin{remark} Since
$k p q + (1\!-\!kp)(1\!-\!kq) \!=\! \tfrac{1}{k+1} \!+\! \tfrac{k}{k+1} (\!(k\!+\!1)p\!-\!1)(\!(k\!+\!1)q \!-\! 1)$,
$\frac{1}{k+1} \leq p,q \leq \frac{1}{k}$ implies $\frac{1}{k\!+\!1} \leq k p q \!+\! (1\!-\!kp)(1\!-\!kq) \leq \frac{1}{k}$. Thus, if both $H_\infty(X_0|Y_0)$ and $H_\infty(X_1|Y_1)$ lie in the interval $[\log k,\log(k+1)]$, then so does the corresponding bound on $H_\infty(X|\mathbf{Y})$.
\end{remark}

\begin{proof}
	We first prove the inequality in the unconditional case.
	The probability mass function of $X_0\oplus X_1$ is given by the convolution with respect to $\mathcal{G}$ of the probability mass functions of $X_0$ and $X_1$. That is, for any $x \in \mathcal{G}$,
	\begin{equation}
		 \P(X_0 \oplus X_1=x) = \sum_{i \in \mathcal{G}} \P(X_0 = x \oplus i) \P(X_1=\ominus i).
		 \vspace*{-2ex}
	\end{equation}
	In particular, \vspace*{-1ex}
	\begin{equation}
		\exp(\!-\!H_\infty(X_0 \oplus X_1)) = \max_{x \in \mathcal{G}} \sum_{i \in \mathcal{G}} \P(X_0 = x\oplus i) \P(X_1= \ominus i).
		\vspace*{-1.5ex}
	\end{equation}
	Hence the problem reduces to upper-bound		\vspace*{-.5ex}
	\begin{equation}
		\max_{x \in \mathcal{G}} \sum_{i \in \mathcal{G}} \P(X_0 = x \oplus i) \P(X_1= \ominus i).
		\vspace*{-0.5ex}
	\end{equation}
	Since 
	$\exp(\!-\!H_\infty(X_0 \ominus x)) = \exp(\!-\!H_\infty(X_0))$ we can assume without loss of generality that the maximum is reached for $x=0$ and the problem reduces to the maximization of
	\begin{equation}
		\sum_{i \in \mathcal{G}} \P(X_0 = i) \P(X_1= \ominus i).
	\end{equation}
	Let $(1),\ldots,(M) \in \mathcal{G}$ be an ordering of the group elements so that
		$\P(X_0 = (1)) \geq \P(X_0 = (2)) \geq \ldots \geq \P(X_0 = (M))$.
	The problem is to maximize		\vspace*{-1ex}
	\begin{equation}
		\sum_{i=1}^M \underbrace{\P(X_0 = (i))}_{p_{(i)}} \underbrace{\P(X_1 = \ominus (i))}_{q_{(i)}}.
		\vspace*{-1ex}
	\end{equation}
	The min-entropy of $X_1$ is invariant under any permutation of its probability mass function. Furthermore, by the \emph{rearrangement inequality} (Lemma~\ref{rearrangment-lemma} in Appendix~\ref{majorization}) a permutation of the probability mass function of $X_1$ maximizing the sum is such that
		$\P(X_1 \!=\! \ominus (1)) \geq \P(X_1 \!=\!  \ominus (2)) \geq \ldots \geq \P(X_1 \!=\! \ominus (\!M\!))$.
	Finally the problem is reduced to
	\begin{equation}
		\max_{\mathbf{p},\mathbf{q}} \phi(\mathbf{p},\mathbf{q}) {\color{blueR} \triangleq} \sum p_{(i)} q_{(i)}
		\label{eqn:max-two-pmfs}
		\vspace*{-1ex}
	\end{equation}
	under the constraint that $\exp(\!-\!H_\infty(X_0))= p_{(1)} = p$ and $\exp(\!-\!H_\infty(X_1)) = q_{(1)} = q$. Moreover, $h$ is Schur-convex in $\mathbf{p}$ when $\mathbf{q}$ is fixed and vice-versa (see Lemma~\ref{schur-combination}  in Appendix~\ref{majorization}). Hence the maximum in~\eqref{eqn:max-two-pmfs} is reached for the least spread out probability mass function under the min entropy constraints. That is  (Lemma~\ref{majorization-lemma} in Appendix~\ref{majorization}),
	\begin{equation}
		\begin{cases}
		(p_{(1)},\ldots,p_{(M)}) = (p,\ldots,p,1-k p,0,\ldots,0)
		\\
		(q_{(1)},\ldots,q_{(M)}\,) = (q,\ldots,q,1-l\,q,0,\ldots,0)
		\end{cases}
	\end{equation}
	where $k = \lfloor p^{-1}  \rfloor$ and $l = \lfloor q^{-1}  \rfloor$. Hence we obtain the bound
	\begin{equation}
		\exp(\!-\!H_\infty(X)) \leq
		\begin{cases}
			k p q + (1 - k p) (1 - k q) & \hspace{-1em} \text{ if } k=l
			\\
			\min \{ p, q \} & \hspace{-1em} \text{ otherwise.}
		\end{cases}
		\label{eq-uncondcase}
	\end{equation}
	
	It remains to prove that~\eqref{eq-uncondcase} carries over to the conditional case.
	Note that the bound is concave in $p$ for a fixed $q$ and vice-versa. Indeed, let $\frac{1}{k+1} \leq q \leq \frac{1}{k}$ be fixed. Then the inequality is piecewise linear in $p$, equal to
	\begin{equation}
		\begin{cases}
			p
			& \text{if } p \leq \frac{1}{k+1}
			\\
			k pq + (1-k p)  (1-k q)
			& \text{if } \frac{1}{k+1} \leq p \leq \frac{1}{k}
			\\
			q
			&\text{otherwise.}
		\end{cases}
	\end{equation}
	The three successive slopes are $1$, $k(k+1)q-k$ and $0$. Since $k(k+1)q-k \in [0,1]$, these slopes are in decreasing order and the function is indeed concave.
	Therefore, applying Jensen's inequality (twice) proves~\eqref{eq-condcase}. 
\end{proof}

\subsection{Extension to $d+1$ Summands}

Jog and Anatharam~\cite{Jog2013TheEPI} extended their generalization of Mrs. Gerber's lemma (for Shannon entropy) for random variables in group of order $2^n$ with two summands by repeating their inequality. In the same fashion, Theorem~\ref{thm:gerber-infty} is extended to $d+1$ summands by repeated application of Theorem~\ref{thm:gerber-infty}:

\begin{theorem} [Extension to $d+1$ summands]
	Let $p_i = \exp(\!-\!H_\infty(X_i|Y_i))$, without loss of generality  assume $p_0 \leq p_1 \leq \ldots \leq p_d$.  Let $k = \lfloor p_0^{-1} \rfloor$, $r = \max \{ i | p_i \leq \frac{1}{k}\}$. Then $H_d=H_\infty(X|\mathbf{Y})$ is lower bounded as
	\begin{equation}
		H_d \geq - \log \biggl(\frac{1}{k+1} + \frac{k^r}{k+1} \prod_{i=0}^r ((k+1)p_i-1) \biggl).
	\end{equation}
	\label{thm:extended-gerber-infty}
\end{theorem}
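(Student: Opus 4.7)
The plan is to prove the bound by induction on the number of summands, applying Theorem~\ref{thm:gerber-infty} at each step to the partial sum and the next share. Specifically, let $Z_j = X_0 \oplus X_1 \oplus \cdots \oplus X_j$ and let $q_j = \exp(-H_\infty(Z_j \mid Y_0,\ldots,Y_j))$, so that $q_0 = p_0$ and the desired bound is on $q_d$. Since the pairs $(X_i,Y_i)$ are independent across $i$, the random variable $Z_{j-1}$ with side information $(Y_0,\ldots,Y_{j-1})$ is independent of $(X_j,Y_j)$, which is exactly the hypothesis of Theorem~\ref{thm:gerber-infty}.

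First, I would establish the key invariant for $1 \leq j \leq r$: both $q_{j-1}$ and $p_j$ lie in the interval $[\tfrac{1}{k+1},\tfrac{1}{k}]$. The lower bound $q_{j-1} \geq \tfrac{1}{k+1}$ comes from the definition $k = \lfloor p_0^{-1}\rfloor$ together with the fact that convolution with an independent summand cannot decrease min-entropy, i.e., $q_{j-1} \leq q_0 = p_0 \leq \tfrac{1}{k}$... wait, actually the upper bound needs the Remark following Theorem~\ref{thm:gerber-infty}: if $q_{j-2}, p_{j-1} \in [\tfrac{1}{k+1},\tfrac{1}{k}]$, then the bound $q_{j-1}$ also falls in $[\tfrac{1}{k+1},\tfrac{1}{k}]$, and $p_j \leq \tfrac{1}{k}$ by the definition of $r$ together with $p_j \geq p_0 \geq \tfrac{1}{k+1}$ by monotonicity. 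This invariant ensures $\lfloor q_{j-1}^{-1}\rfloor = k = \lfloor p_j^{-1}\rfloor$, so the first branch of~\eqref{eq-condcase} applies.

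Next, I would rewrite that branch using the reparametrization $a_j = (k+1)q_j - 1$ and $b_j = (k+1)p_j - 1$, both in $[0,1]$. The identity in the Remark gives the clean multiplicative recursion
\begin{equation}
a_j \leq k\, a_{j-1} b_j.
\end{equation}
Iterating from $a_0 = b_0$ yields $a_r \leq k^r \prod_{i=0}^r b_i$, which rearranges to
\begin{equation}
q_r \leq \frac{1}{k+1} + \frac{k^r}{k+1}\prod_{i=0}^r \bigl((k+1)p_i - 1\bigr),
\end{equation}
matching the claimed bound on $-\log q_r$.

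Finally, I would handle the tail $j > r$. Here $p_j > \tfrac{1}{k}$, so $\lfloor p_j^{-1}\rfloor < k = \lfloor q_{j-1}^{-1}\rfloor$ and the second branch of~\eqref{eq-condcase} gives $q_j \leq \min\{q_{j-1},p_j\} = q_{j-1}$, hence $H_\infty(Z_j\mid\mathbf{Y}) \geq H_\infty(Z_{j-1}\mid\mathbf{Y})$; alternatively this monotonicity follows directly from the fact that convolution with an independent variable cannot decrease min-entropy. Either way, $q_d \leq q_r$, and the theorem follows. The only delicate step is the invariant in the first paragraph: one must invoke both the Remark (to keep the upper bound $q_j \leq \tfrac{1}{k}$) and a simple monotonicity argument (to keep the lower bound $q_j \geq \tfrac{1}{k+1}$), and verify that the base case $j=1$ is consistent with the definitions of $k$ and $r$.
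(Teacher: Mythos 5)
Your overall architecture is the same as the paper's: both proofs obtain the result by repeatedly applying Theorem~\ref{thm:gerber-infty} to the running partial sum $Z_{j-1}=X_0\oplus\cdots\oplus X_{j-1}$ and the next share $X_j$, use the first branch of~\eqref{eq-condcase} for the shares with $p_j\le\frac{1}{k}$, discard the shares with $p_j>\frac{1}{k}$ via the monotonicity $q_j\le q_{j-1}$, and unroll the resulting multiplicative recursion (the paper phrases this as an induction on $d$ with the two cases $r_{d+1}<d+1$ and $r_{d+1}=d+1$, which is your forward iteration in reverse).

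The one place where your argument as written does not hold up is the invariant $q_{j-1}\ge\frac{1}{k+1}$. Monotonicity of min-entropy under convolution gives $q_{j-1}\le q_0=p_0\le\frac{1}{k}$, i.e.\ the \emph{upper} half of the invariant; it cannot give the lower half, and your final paragraph assigns the two tools to the wrong halves. The Remark only guarantees that the iterated \emph{bound} stays in $[\frac{1}{k+1},\frac{1}{k}]$; the actual value $q_{j-1}$ can drop below $\frac{1}{k+1}$ (the partial sum can be strictly flatter than either summand), in which case $\lfloor q_{j-1}^{-1}\rfloor>k$ and the first branch of~\eqref{eq-condcase} is not the one that applies. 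The gap is reparable in one line: in that case the second branch gives $q_j\le\min\{q_{j-1},p_j\}=q_{j-1}$, and since $a_{j-1}=(k+1)q_{j-1}-1<0$ and $kb_j\le 1$ one still has $a_j\le a_{j-1}\le k\,a_{j-1}b_j$, so the recursion $a_j\le k\,a_{j-1}b_j$ survives and the unrolling $a_r\le k^r\prod_{i=0}^r b_i$ goes through (this chaining is valid because the right-hand sides $k^j\prod_{i\le j}b_i$ are nonnegative). Equivalently, you can iterate the bound rather than the value, using the concavity/monotonicity of the two-summand bound in each argument as the paper implicitly does at its step $(b)$. Either fix should be stated explicitly; with it, your proof is complete.
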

\vspace{-.8em}
\begin{proof} See Appendix~\ref{thm:proof-extended-gerber-infty}.
\end{proof}

In the side-channel context, it is particularly interesting to characterize the behavior of the inequality in the high entropy regime in terms of \textcolor{blueR}{maximal leakage}. This corresponds to the high noise regime of Theorem~\ref{thm:linear-bound}.
\begin{theorem}[Asymptotic for High Noise]
	Let $I_d = I_\infty(X;\mathbf{Y})$ in bits, then as $I_\infty(X_i;Y_i) \rightarrow 0$,
	\begin{equation}
		I_d \leq C_d \prod_{i=0}^d I_\infty(X_i;Y_i) + o\biggl( \prod_{i=0}^d I_\infty(X_i;Y_i) \biggl)
	\end{equation}
	where $C_d = (M-1)^d (\ln 2)^d$.
	\label{thm:taylor1}
\end{theorem}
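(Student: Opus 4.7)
The plan is to combine Theorem~\ref{thm:extended-gerber-infty} with a first-order Taylor expansion in the small parameters $I_\infty(X_i;Y_i)$.

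First I would translate everything into the variables appearing in Theorem~\ref{thm:extended-gerber-infty}. Because each $X_i$ is uniform on $\mathcal{G}$, the identity $I_\infty(X_i;Y_i) = \log M - H_\infty(X_i|Y_i)$ gives $M p_i = \exp(I_\infty(X_i;Y_i))$, where $p_i = \exp(-H_\infty(X_i|Y_i))$. The same identity applies to $I_d$ and $H_\infty(X|\mathbf{Y})$. In particular, $p_i \to 1/M$ from above as $I_\infty(X_i;Y_i) \to 0$.

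Next I would pin down the integers $k$ and $r$ appearing in Theorem~\ref{thm:extended-gerber-infty} in this regime. For sufficiently small $I_\infty(X_i;Y_i)$, each $p_i$ lies strictly in the open interval $(1/M,\, 1/(M-1))$, so $k = \lfloor p_0^{-1} \rfloor = M-1$ and $r = \max\{i : p_i \leq 1/k\} = d$. Plugging these values into Theorem~\ref{thm:extended-gerber-infty} and then multiplying inside the logarithm by $M$ yields
\begin{equation*}
    \exp(I_d) \;\leq\; 1 + (M-1)^d \prod_{i=0}^d (M p_i - 1).
\end{equation*}

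Finally I would Taylor-expand. Since $I_d$ is measured in bits, every $\exp$ above should be read as $2^{\cdot}$ and every $\log$ as $\log_2$. Substituting $M p_i - 1 = 2^{I_\infty(X_i;Y_i)} - 1 = (\ln 2)\, I_\infty(X_i;Y_i) + O(I_\infty(X_i;Y_i)^2)$ into the product, then applying $\log_2(1+x) = x/\ln 2 + O(x^2)$ to both sides yields
\begin{equation*}
    I_d \;\leq\; (M-1)^d (\ln 2)^d \prod_{i=0}^d I_\infty(X_i;Y_i) + o\Big(\prod_{i=0}^d I_\infty(X_i;Y_i)\Big),
\end{equation*}
which is the claim with $C_d = (M-1)^d(\ln 2)^d$. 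The only delicate point is the base-conversion book-keeping: every $\exp$ inherited from Theorem~\ref{thm:extended-gerber-infty} must be read as $2^{\cdot}$ since the asymptotic is in bits, which is precisely why exactly one factor of $\ln 2$ survives in $C_d$ (the product contributes $(\ln 2)^{d+1}$ and the outer $\log_2$ divides by one $\ln 2$). The remainder is $o$ of the product because each factor carries a relative correction of order $I_\infty(X_i;Y_i) \to 0$.
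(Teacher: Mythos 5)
Your proposal is correct and follows essentially the same route as the paper's proof: specialize Theorem~\ref{thm:extended-gerber-infty} to the high-entropy regime where $k=M-1$ and $r=d$, rewrite $Mp_i-1=\exp(I_\infty(X_i;Y_i))-1$, and Taylor-expand the exponentials and the outer logarithm. Your extra care with the base-$2$ bookkeeping (one factor of $\ln 2$ cancelling against the outer $\log_2$) is exactly the computation the paper performs implicitly.
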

\begin{proof}
	See Appendix~\ref{proof-taylor1}.
\end{proof}

\subsection{Refined Unconditioned Extension to $d+1$ Summands}

In contrast to Theorem~\ref{thm:gerber-infty}, Theorem \ref{thm:extended-gerber-infty} is not guaranteed to be optimal when $d>1$. The inequality can be improved by exploiting the structure of the sum of multiple random variables. We derive an improved bound which is optimal for entropies in the range $[\log(k\!-\!1),\log(k)]$ provided that there is a subgroup of $\mathcal{G}$ of order $k$. In particular, it is optimal in the high entropy regime $ [\log(M\!-\!1),\log(M)]$ (since the group itself is a subgroup of order $M$).

\begin{theorem}[Refined extension]
	Let $p_i = \exp(\!-\!H_\infty(X_i))$, without loss of generality we assume $p_0 \leq p_1 \leq \ldots \leq p_d$.  Let $k = \lfloor p_0^{-1} \!\rfloor$, $r = \max \{ i | p_i \leq \frac{1}{k}\}$. Let $H_d=H_\infty(X)$,
	\begin{equation}
		H_d \! \geq \!
		\begin{cases}
			\!-\! \log \bigl( \frac{1}{k+1} \!+\! \frac{1}{k+1} \prod \limits_{j=0}^r ((k\!+\!1)p_i \!-\! 1) \bigl) & \hspace{-1em} \text{ if $r$ is even,}
			\\
			\!-\! \log \bigl( \frac{1}{k+1} \!+\! \frac{k}{k+1} \prod \limits_{j=0}^r ((k\!+\!1)p_i \!-\! 1) \bigl) & \hspace{-1em} \text{ if $r$ is odd.}
		\end{cases}
	\end{equation}
	\label{thm:refined-extension}
\end{theorem}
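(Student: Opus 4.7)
The plan is to combine the reduction used in the proof of Theorem~\ref{thm:gerber-infty} with a direct convolution computation inside a subgroup of $\mathcal{G}$. That two-summand proof shows one may assume each extremal pmf is of the form $(p_i,\ldots,p_i,1-k_ip_i,0,\ldots,0)$ with $k_i = \lfloor p_i^{-1}\rfloor$ atoms of mass $p_i$, up to arbitrary relabeling of the group elements. I would first extend this reduction coordinate by coordinate to all $d+1$ summands by repeated use of the rearrangement inequality and the Schur-convexity of the convolution maximum (both already invoked in the two-summand proof): the supremum of $\max_x (p_{X_0}*\cdots*p_{X_d})(x)$ under the constraints $\max p_{X_i}=p_i$ is attained at such extremal pmfs. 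By translation invariance of $H_\infty$, I would then align the supports of the $r+1$ ``tight'' extremal pmfs (those with $i\leq r$, so that $p_i\leq 1/k$) inside a fixed subgroup $H\leq\mathcal{G}$ of order $k+1$, which is the setting in which tightness is claimed.

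Writing each tight pmf as $p_{X_i} = \tfrac{1}{k+1}\mathbf{1}_H + \beta_i\, u_{g_i}$ with $\beta_i = \tfrac{(k+1)p_i-1}{k+1}$ and $u_g = \mathbf{1}_H - (k+1)\delta_g$, the elementary identities
\begin{equation}
\mathbf{1}_H*\mathbf{1}_H = (k+1)\mathbf{1}_H,\quad u_g*\mathbf{1}_H = 0,\quad u_g*u_{g'} = -(k+1)\,u_{g\oplus g'}
\end{equation}
immediately yield by iteration the closed form
\begin{equation}
p_{X_0}*\cdots*p_{X_r} = \tfrac{1}{k+1}\mathbf{1}_H + (-(k+1))^r \Bigl(\prod_{i=0}^r \beta_i\Bigr) u_{g_0\oplus\cdots\oplus g_r}.
\end{equation}
The maximum of this mixture is then read off from the sign of the coefficient of $u$: for $r$ even, that coefficient is positive and the max, attained at any non-exceptional element, equals $\tfrac{1}{k+1}+\tfrac{1}{k+1}\prod_{i=0}^r((k+1)p_i-1)$; for $r$ odd, the coefficient is negative and the max is attained at the exceptional element $g_0\oplus\cdots\oplus g_r$, where $u_g(g)=-k$ contributes the extra factor $k$, giving $\tfrac{1}{k+1}+\tfrac{k}{k+1}\prod_{i=0}^r((k+1)p_i-1)$. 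The remaining summands $X_{r+1},\ldots,X_d$ (for which $p_i>1/k$) can only contribute the ``otherwise'' branch $\min\{p,q\}$ of Theorem~\ref{thm:gerber-infty} when convolved in, so they do not worsen the bound and the two displayed expressions govern $H_d$.

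The main obstacle is the joint reduction and alignment step. Iterating the two-summand Schur-convexity/rearrangement argument must be done while preserving the joint support structure across all $d+1$ pmfs, and the freedom to simultaneously translate the supports of the $r+1$ tight extremal pmfs into a common subgroup $H$ of order $k+1$ has to be justified carefully. This is precisely where the hypothesis on the existence of a subgroup of the appropriate order (mentioned in the paragraph preceding the theorem) becomes essential for tightness; absent such a subgroup, the same strategy still produces a valid upper bound on $\exp(-H_d)$ by embedding the supports into the best-available substructure and invoking monotonicity of the max-of-convolution functional under majorization, but the bound then ceases to be optimal.
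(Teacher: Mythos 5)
Your group-algebra computation inside a subgroup $H$ of order $k+1$ is correct and cleanly explains both the parity phenomenon and the equality case, but it only establishes the formula when all the supports actually lie in (cosets of) such a subgroup. That is the \emph{achievability} direction: it shows the bound is attained when a subgroup of order $k+1$ exists. It does not prove the stated inequality, which holds for an arbitrary finite Abelian group $\mathcal{G}$ and arbitrary supports. After the Schur-convexity/rearrangement reduction, each extremal pmf is supported on an arbitrary $(k+1)$-element subset of $\mathcal{G}$, and there is no translation or relabeling that moves these subsets into a common subgroup of order $k+1$ --- such a subgroup need not even exist (e.g.\ $\mathcal{G}=\mathbb{Z}_{13}$ with $k=1$). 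Nor is restricting to subgroup-supported pmfs a harmless specialization: since the quantity being bounded is a supremum over all pmfs with the prescribed min-entropies, evaluating it on a special subfamily only gives a lower bound on that supremum, which is the wrong direction.

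The missing ingredient is precisely the paper's Lemma~\ref{convolution-lemma}: for \emph{arbitrary} supports, the convolution of two extremal pmfs $(p,\ldots,p,1-kp,0,\ldots,0)$ and $(q,\ldots,q,1-kq,0,\ldots,0)$ is \emph{majorized} by $(r,\ldots,r,1-kr,0,\ldots,0)$ with $r=p+q-(k+1)pq$, i.e.\ the subgroup-aligned configuration is the worst case. This is proved by a counting argument (the convolution has $(k+1)^2$ positive terms; each output mass collects at most $k+1$ of them, at most one involving $1-kp$ or $1-kq$) combined with the rearrangement inequality, and it is what allows the induction on $d$ to go through via Schur-convexity. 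Your closing paragraph gestures at ``monotonicity of the max-of-convolution functional under majorization'' but never states or proves this majorization claim, and it also misplaces the role of the subgroup hypothesis: the subgroup is needed only for \emph{tightness} of the bound, not for its validity. As written, the proposal proves the equality case and leaves the inequality itself unestablished.
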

\vspace{-1.5em}
\begin{proof} See Appendix~\ref{proof-refined}.
\end{proof}
Contrary to Theorem~\ref{thm:extended-gerber-infty}, Theorem~\ref{thm:refined-extension} does not apply to conditional min-entropy in general. In fact, when all the variables are fixed except one, the bound inside the logarithm is piece-wise linear but discontinuous in $\frac{1}{k}$ when $r$ is even. This discontinuity breaks the convexity of the inequality. Ensuring continuity for the desired convexity, we are led back to the expression of Theorem~\ref{thm:extended-gerber-infty}. However, under the assumption that
\begin{equation}
	\frac{1}{M} \leq \exp(\!-\!H_\infty(X_i|Y_i=y)) \leq \frac{1}{M-1}
 	\label{eq:hypothesis}
\end{equation}
for all $i$ and $y$,
 the bound of Theorem~\ref{thm:refined-extension} inside the logarithm is linear and we do obtain a conditional inequality. Fortunately, assumption~\eqref{eq:hypothesis} makes sense in the side-channel context. In fact, a common leakage model is $ Y_i = f_i(X_i) + \sigma \mathcal{N}(0,1)$ where $f_i$ is a fixed (possibly unknown) leakage function, such as the Hamming weight or a linear combination of the bits of the variable $X_i$. In particular \eqref{eq:hypothesis} holds for large enough $\sigma$ (high noise regime). Then we have the following

\begin{theorem}[Taylor Expansion] Assume \eqref{eq:hypothesis} and let $I_d = I_\infty(X;\mathbf{Y})$ in bits, then  as $I_\infty(X_i;Y_i) \rightarrow 0$,
	\begin{equation}
		I_d \leq C_d \prod_{j=0}^d I_\infty(X_i;Y_i)  + o\biggl( \prod_{j=0}^d I_\infty(X_i;Y_i) \biggl)
	\end{equation}
	where
	\vspace{-.5em}
	\begin{equation}
		C_d =
		\begin{cases}
			(\ln 2)^d & \text{ if $d$ is even,}
			\\
			(M-1) (\ln 2)^d & \text{ if $d$ is odd.}
		\end{cases}
	\label{equ:cd-pri}
	\end{equation}
	\label{thm:taylor2}
\end{theorem}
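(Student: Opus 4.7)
The plan mirrors that of Theorem~\ref{thm:taylor1} but departs from the sharper Theorem~\ref{thm:refined-extension}, which hypothesis~\eqref{eq:hypothesis} makes accessible in a conditional form. Write $p_i(y):=\exp(-H_\infty(X_i|Y_i=y))$; by~\eqref{eq:hypothesis} one has $p_i(y)\in[1/M,1/(M-1)]$, so in the notation of Theorem~\ref{thm:refined-extension} $k=M-1$ and $r=d$ uniformly in $i$ and $y$. The pairs $(X_i,Y_i)$ are independent across $i$, hence the shares $X_i$ remain independent given $\mathbf{Y}=\mathbf{y}$, and Theorem~\ref{thm:refined-extension} applied to this conditional joint distribution yields
\begin{equation*}
\exp(-H_\infty(X|\mathbf{Y}=\mathbf{y}))\leq \frac{1}{M}+\frac{c_d}{M}\prod_{i=0}^{d}\bigl(Mp_i(y_i)-1\bigr),
\end{equation*}
with $c_d=1$ if $d$ is even and $c_d=M-1$ if $d$ is odd.

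The decisive observation is that the right-hand side is \emph{multilinear} in the family $\{p_i(y_i)\}_i$. Because the $Y_i$ are independent, the expectation over $\mathbf{Y}$ factorises through the product, and combining with $\exp(-H_\infty(X|\mathbf{Y}))=\E_{\mathbf{Y}}\exp(-H_\infty(X|\mathbf{Y}=\mathbf{y}))$ and $\E_{Y_i}p_i(Y_i)=\exp(-H_\infty(X_i|Y_i))=:q_i$ gives
\begin{equation*}
\exp(-H_\infty(X|\mathbf{Y}))\leq \frac{1}{M}+\frac{c_d}{M}\prod_{i=0}^{d}(Mq_i-1).
\end{equation*}
Since $X$ and every $X_i$ are uniform on $\mathcal{G}$, one has $I_\infty(X;\mathbf{Y})=\log_2 M-H_\infty(X|\mathbf{Y})$ in bits and similarly $\epsilon_i:=I_\infty(X_i;Y_i)=\log_2 M-H_\infty(X_i|Y_i)$, so $Mq_i-1=2^{\epsilon_i}-1$. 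Multiplying the previous bound by $M$ and taking $\log_2$ yields
\begin{equation*}
I_d\leq \log_2\Bigl(1+c_d\prod_{i=0}^{d}(2^{\epsilon_i}-1)\Bigr),
\end{equation*}
and a routine Taylor expansion at the origin, using $2^{\epsilon_i}-1=(\ln 2)\epsilon_i+O(\epsilon_i^2)$ and $\log_2(1+x)=x/\ln 2+O(x^2)$, extracts the leading term $c_d(\ln 2)^d\prod_i\epsilon_i$, matching~\eqref{equ:cd-pri}.

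The only delicate step is the conditional lifting. Without~\eqref{eq:hypothesis} the integer $k=\lfloor p_i(y)^{-1}\rfloor$ in Theorem~\ref{thm:refined-extension} could vary with $y_i$, and the resulting bound would be piecewise linear with discontinuities at the breakpoints $1/k$---precisely the obstruction flagged in the paragraph preceding the theorem statement, which also switches the parity-dependent prefactor $c_d$. Hypothesis~\eqref{eq:hypothesis} pins $k$ to $M-1$ uniformly and turns the bound into the genuinely multilinear expression above, which is exactly what makes the factorisation of $\E_{\mathbf{Y}}$ legitimate. Everything past that point is standard asymptotics.
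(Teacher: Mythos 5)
Your proposal is correct and follows essentially the same route as the paper: the paper's (very terse) proof relies on exactly the observation you make — that hypothesis~\eqref{eq:hypothesis} pins $k=M-1$ and $r=d$, making the bound of Theorem~\ref{thm:refined-extension} multilinear so that it lifts to the conditional case by independence of the $(X_i,Y_i)$, after which the stated Taylor expansions of the exponential and logarithm give $C_d$. You have merely spelled out the conditional-lifting step that the paper only sketches in the paragraph preceding the theorem.
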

\vspace{-1.em}
\begin{proof} Taylor expansion of the exponential about $0$ and of the logarithm about $1$.
\end{proof}
Theorem~\ref{thm:taylor2} is particularly interesting because it suggests that, with respect to the \emph{worst} case leakage distribution, masking of \emph{odd} order $d$ is not useful compared to masking with order $d-1$ at high noise. In practice, however, for observed leakages this phenomenon may not apply. {\color{blueR} Theorem~\ref{thm:taylor2} is different from Theorem~\ref{thm:taylor1} as the constant $C_d$ is improved largely. Though Theorem~\ref{thm:taylor2} requires the high noise assumption \eqref{eq:hypothesis} to hold.}

Finally, combining Theorem~\ref{thm:taylor2} and Theorem~\ref{thm:linearBound} yields a bound on the probability of success
\begin{corollary}[Bound on $\P_s$]\label{cor-bound_PS} For $m$ traces, as $\P_s \rightarrow \tfrac{1}{M}$,
	\begin{equation}
		\P_s \leq \frac{\exp(m I_\infty(X;\mathbf{Y}))}{M} \approx \frac{1}{M} + \frac{m C_d}{M} \prod_{i=0}^d I_\infty(X_i;Y_i).
	\end{equation}
\end{corollary}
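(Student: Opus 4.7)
The proof will simply stitch together Theorem~\ref{thm:linearBound} and Theorem~\ref{thm:taylor2} and then do a single first-order Taylor expansion of $\exp$. The first inequality is immediate: Theorem~\ref{thm:linearBound} gives $\log(M\P_s) \leq m\, I_\infty(X;\bm{Y})$, and exponentiating both sides together with dividing by $M$ yields directly $\P_s \leq \exp(m\, I_\infty(X;\bm{Y}))/M$. So the only real content is the asymptotic equivalent on the right-hand side.

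For the asymptotic approximation, I would first observe that the regime $\P_s \to 1/M$ is equivalent to $I_\infty(X;\bm{Y}) \to 0$ (since by the same linear bound $M\P_s \geq 1$ and $I_\infty(X;\bm{Y}) \geq 0$, while the unconditional product bound of Theorem~\ref{thm:taylor2} forces $I_\infty(X;\bm{Y}) \to 0$ whenever each $I_\infty(X_i;Y_i) \to 0$). Under this regime I would write the Taylor expansion
\begin{equation}
	\exp(m\, I_\infty(X;\bm{Y})) = 1 + m\, I_\infty(X;\bm{Y}) + o(I_\infty(X;\bm{Y})),
\end{equation}
so that
\begin{equation}
	\frac{\exp(m\, I_\infty(X;\bm{Y}))}{M} = \frac{1}{M} + \frac{m}{M} I_\infty(X;\bm{Y}) + o(I_\infty(X;\bm{Y})).
\end{equation}

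Finally, I would substitute the bound from Theorem~\ref{thm:taylor2}, namely $I_\infty(X;\bm{Y}) \leq C_d \prod_{i=0}^d I_\infty(X_i;Y_i) + o\bigl(\prod_{i=0}^d I_\infty(X_i;Y_i)\bigr)$, to obtain
\begin{equation}
	\P_s \leq \frac{1}{M} + \frac{m\, C_d}{M} \prod_{i=0}^d I_\infty(X_i;Y_i) + o\biggl(\prod_{i=0}^d I_\infty(X_i;Y_i)\biggr),
\end{equation}
which is exactly the stated claim, since the approximation symbol $\approx$ in the corollary is read in the sense of matching the dominant term as each $I_\infty(X_i;Y_i)\to 0$.

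There is no real obstacle here beyond bookkeeping of the little-$o$ terms: one must be sure that the remainder from the exponential expansion, which is $O((m I_\infty(X;\bm{Y}))^2)$, is absorbed into the product remainder of Theorem~\ref{thm:taylor2} — this is automatic because $I_\infty(X;\bm{Y})^2 = o\bigl(\prod_{i=0}^d I_\infty(X_i;Y_i)\bigr)$ (each factor already tends to zero and there are $d+1\geq 2$ of them, so the product is of higher order than the square of a single factor times $C_d$). The asymptotic regime for the assumption \eqref{eq:hypothesis} required by Theorem~\ref{thm:taylor2} is also consistent with $\P_s\to 1/M$, since the typical high-noise Gaussian leakage model cited just before that theorem realizes both limits simultaneously.
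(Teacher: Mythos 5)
Your proposal is correct and follows exactly the paper's intended route: the paper derives this corollary in one line by combining Theorem~\ref{thm:linearBound} (exponentiated) with Theorem~\ref{thm:taylor2}, which is precisely what you do. Your extra bookkeeping of the little-$o$ terms (noting that the quadratic remainder of the exponential is $o\bigl(\prod_{i=0}^d I_\infty(X_i;Y_i)\bigr)$ because $I_\infty(X;\mathbf{Y})=O\bigl(\prod_{i=0}^d I_\infty(X_i;Y_i)\bigr)$ and the product tends to zero) is a valid and slightly more careful justification of the $\approx$ than the paper provides.
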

This is to be compared with the bound of \cite[Eqn.~8]{DBLP:journals/iacr/BeguinotCGLMRS22}:
\begin{proposition} 
	As $\P_s \rightarrow \tfrac{1}{M}$,\vspace*{-2ex}
	\begin{equation} 
		\P_s \leq \frac{1}{M} + \sqrt{m} A_d \biggl(\prod_{i=0}^d I(X_i,Y_i)\biggl)^{\tfrac{1}{2}}
	\end{equation}
	where
		$A_d = \sqrt{M-1} (2 \ln 2)^{\tfrac{d+1}{2}} M^{-1}$.
	\label{prop:cosade-eq}
\end{proposition}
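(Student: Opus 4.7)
The plan is to mirror the four-step derivation of Corollary~\ref{cor-bound_PS}, but replacing every $\alpha=\infty$ quantity by its Shannon ($\alpha=1$) counterpart. First, I would invoke the generalized Fano inequality~\eqref{fano} at $\alpha=1$, which for a uniformly distributed key reads $I(K;\bm{Y}^m|T^m) \geq d(\P_s \,\|\, 1/M)$, where $d(\cdot\,\|\,\cdot)$ is the binary Kullback--Leibler divergence; this is the original inequality of de Chérisey et al.~\cite{CheriseyGuilleyRioulPiantanida19}.

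Next, I would Taylor-expand $d(p\,\|\,1/M)$ about $p=1/M$. A direct second-order computation (using $\frac{d^2}{dp^2}d(p\|q)|_{p=q}=\frac{1}{q(1-q)\ln 2}$ with $q=1/M$) gives
\[
d(p\,\|\,1/M) = \frac{M^2}{2(M-1)\ln 2}\bigl(p-\tfrac{1}{M}\bigr)^2 + o\bigl((p-\tfrac{1}{M})^2\bigr),
\]
so inverting this relation and taking a square root yields $\P_s - 1/M \leq \tfrac{\sqrt{M-1}}{M}\sqrt{2\ln 2}\,\sqrt{I(K;\bm{Y}^m|T^m)}$ up to lower-order terms as $\P_s \to 1/M$.

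The third step reduces the conditional mutual information to a per-share, single-trace quantity exactly as in the proof of Theorem~\ref{thm:linear-bound}: the Markov chain $(K,T^m) \leftrightarrow X^m \leftrightarrow \bm{Y}^m$, the data processing inequality for Shannon information, and the i.i.d.\ memoryless structure of the channel together give $I(K;\bm{Y}^m|T^m) \leq m\,I(X;\bm{Y})$. Finally, the Shannon Mrs.~Gerber's lemma of~\cite{DBLP:journals/iacr/BeguinotCGLMRS22}, in its high-noise asymptotic form, bounds $I(X;\bm{Y}) \leq (2\ln 2)^d \prod_{i=0}^d I(X_i;Y_i) + o\bigl(\prod_{i=0}^d I(X_i;Y_i)\bigr)$. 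Chaining these three inequalities inside the square root produces the announced prefactor $A_d = \sqrt{M-1}\,(2\ln 2)^{(d+1)/2}/M$, since $\sqrt{2\ln 2}\cdot\sqrt{(2\ln 2)^d} = (2\ln 2)^{(d+1)/2}$.

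The only real bookkeeping obstacle is to verify that the high-noise asymptotic constant of the Shannon Mrs.~Gerber's lemma is exactly $(2\ln 2)^d$, independent of $M$, rather than a group-dependent quantity like $(M-1)^d$ appearing in Theorem~\ref{thm:taylor1}. This verification is also what makes the side-by-side comparison with Corollary~\ref{cor-bound_PS} informative: the maximal-leakage bound replaces the $\sqrt{m}$ scaling by $m$ (worse in the number of traces) while replacing $(2\ln 2)^{(d+1)/2}$ by a constant that can be much smaller when $M$ is large, so that neither bound strictly dominates the other.
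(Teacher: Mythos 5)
Your proposal is correct and follows essentially the same route as the paper: the paper likewise starts from de Ch\'erisey et al.'s Fano bound $m I(X;\mathbf{Y}) \geq \log M - h(\P_s) - (1-\P_s)\log(M-1)$ (which equals $d(\P_s\|1/M)$, so your direct expansion of the binary divergence is the same computation as the paper's second-order expansion of $h$ at $1/M$), inverts to get $\P_s \leq \tfrac{1}{M} + \sqrt{2\ln 2\,(M-1)m\,I(X;\mathbf{Y})}/M$, and then plugs in the $(2\ln 2)^d\prod_i I(X_i;Y_i)$ bound cited from \cite[Eqn.~8]{DBLP:journals/iacr/BeguinotCGLMRS22}. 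The only cosmetic difference is that you re-derive the step $I(K;\bm{Y}^m|T^m)\le m I(X;\bm{Y})$ from the DPI and memorylessness, whereas the paper imports the combined inequality directly from \cite{CheriseyGuilleyRioulPiantanida19}.
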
 
\vspace{-1.3em}
\begin{proof}
	See Appendix~\ref{proof:cosade-eq}.
\end{proof} 

As expected both bounds decrease exponentially in $d$ to the minimum value $\frac{1}{M}$. Although $I$ and $I_\infty$ are different metrics, we observe that
\begin{itemize}
	\item the constant factor $C_d/M$ for $I_\infty$ in~\eqref{equ:cd-pri} is exponentially lower in $d$ than the factor $A_d$ for $I$;
	\item the exponential decay in $d$ is twice higher for $I_\infty$;
	\item the inequality scales better for $I$ than for $I_\infty$ in terms of number $m$ of traces (since we compared both bounds for $\P_s \approx \frac{1}{M}$, $m$ is not necessarily taken large).
\end{itemize}

Finally, we can contrast both bounds on a toy example. Let $Y_i$ be uniformly distributed in $\{ x \in \mathcal{G} | x \neq X_i\}$. Then it is easily seen that $I(X_i,Y_i)=I_\infty(X_i,Y_i)=\log(\frac{M}{M-1})$. In this case,  the bound of this paper outperforms the bound of \cite{DBLP:journals/iacr/BeguinotCGLMRS22} in the high noise regime ($\P_s\to \frac{1}{M}$). Both bounds are compared numerically in Figs.~\ref{fig:d1} and~\ref{fig:d2} in Appendix~\ref{toy-example} for $d=1$ and $2$, respectively, and $M=256$.	

\section{Conclusion and Perspectives}\label{sec-concl}

We have shown that maximal leakage for masked implementations can be used to bound the probability of success of any side-channel attack. Maximal leakage is bounded by an efficiently computable bound based on a new variation of Mrs. Gerber's lemma for min-entropy. The bound tightness is commented with some example groups and probability mass function with figures in Appendix~\ref{discussion}. 
	
Improving the inequality when there is no subgroup of order $k+1$ in $\mathcal{G}$ is an interesting perspective. Indeed, groups of prime order which have no subgroup except the trivial ones are of major interest for their application to masking in asymmetric cryptographic schemes (especially post-quantum schemes). Besides, it would also be of interest to check whether the parity of $d$ does play a practical role in the efficiency of masked implementations.

\cleardoublepage

\IEEEtriggeratref{11}
\bibliographystyle{IEEEtranS}
\bibliography{ITbound_masking}

\begin{thebibliography}{10}
\providecommand{\url}[1]{#1}
\csname url@samestyle\endcsname
\providecommand{\newblock}{\relax}
\providecommand{\bibinfo}[2]{#2}
\providecommand{\BIBentrySTDinterwordspacing}{\spaceskip=0pt\relax}
\providecommand{\BIBentryALTinterwordstretchfactor}{4}
\providecommand{\BIBentryALTinterwordspacing}{\spaceskip=\fontdimen2\font plus
\BIBentryALTinterwordstretchfactor\fontdimen3\font minus
  \fontdimen4\font\relax}
\providecommand{\BIBforeignlanguage}[2]{{%
\expandafter\ifx\csname l@#1\endcsname\relax
\typeout{** WARNING: IEEEtranS.bst: No hyphenation pattern has been}%
\typeout{** loaded for the language `#1'. Using the pattern for}%
\typeout{** the default language instead.}%
\else
\language=\csname l@#1\endcsname
\fi
#2}}
\providecommand{\BIBdecl}{\relax}
\BIBdecl

\bibitem{DBLP:conf/ches/AkkarG01}
\BIBentryALTinterwordspacing
M.~Akkar and C.~Giraud, ``An implementation of {DES} and {AES}, secure against
  some attacks,'' in \emph{Cryptographic Hardware and Embedded Systems - {CHES}
  2001, Third International Workshop, Paris, France, May 14-16, 2001,
  Proceedings}, ser. Lecture Notes in Computer Science, {\c{C}}.~K. Ko{\c{c}},
  D.~Naccache, and C.~Paar, Eds., vol. 2162.\hskip 1em plus 0.5em minus
  0.4em\relax Springer, 2001, pp. 309--318. [Online]. Available:
  \url{https://doi.org/10.1007/3-540-44709-1\_26}
\BIBentrySTDinterwordspacing

\bibitem{Arimoto75}
S.~Arimoto, ``Information measures and capacity of order $\alpha$ for discrete
  memoryless channels,'' in \emph{Topics in Information Theory, Proc. 2nd
  Colloq. Math. Societatis J\'anos Bolyai}, A.~Joux, Ed., vol.~16, 1975, pp.
  41--52.

\bibitem{DBLP:journals/iacr/BeguinotCGLMRS22}
\BIBentryALTinterwordspacing
J.~B{\'e}guinot, W.~Cheng, S.~Guilley, Y.~Liu, L.~Masure, O.~Rioul, and F.-X.
  Standaert, ``Removing the field size loss from {D}uc et al.'s conjectured
  bound for masked encodings,'' \emph{{IACR} Cryptol. ePrint Arch.}, pp. 1--18,
  2022. [Online]. Available: \url{https://eprint.iacr.org/2022/1738}
\BIBentrySTDinterwordspacing

\bibitem{DBLP:conf/dsd/BeguinotCGR22}
\BIBentryALTinterwordspacing
J.~B{\'e}guinot, W.~Cheng, S.~Guilley, and O.~Rioul, ``Be my guess: {g}uessing
  entropy {vs.} success rate for evaluating side-channel attacks of secure
  chips,'' in \emph{25th Euromicro Conference on Digital System Design, {DSD}
  2022, Maspalomas, Spain, August 31 - Sept. 2, 2022}.\hskip 1em plus 0.5em
  minus 0.4em\relax {IEEE}, 2022, pp. 496--503. [Online]. Available:
  \url{https://doi.org/10.1109/DSD57027.2022.00072}
\BIBentrySTDinterwordspacing

\bibitem{cheng2022attacking}
W.~Cheng, Y.~Liu, S.~Guilley, and O.~Rioul, ``Attacking masked cryptographic
  implementations: {i}nformation-theoretic bounds,'' in \emph{2022 IEEE
  International Symposium on Information Theory (ISIT)}.\hskip 1em plus 0.5em
  minus 0.4em\relax IEEE, 2022, pp. 654--659.

\bibitem{CheriseyGuilleyRioulPiantanida19}
\BIBentryALTinterwordspacing
E.~{de Ch{\'e}risey}, S.~Guilley, O.~Rioul, and P.~Piantanida, ``Best
  information is most successful: {m}utual information and success rate in
  side-channel analysis,'' \emph{{IACR} Trans. Cryptogr. Hardw. Embed. Syst.},
  vol. 2019, pp. 49--79, 2019. [Online]. Available:
  \url{https://tches.iacr.org/index.php/TCHES/article/view/7385/6557}
\BIBentrySTDinterwordspacing

\bibitem{DucFS15}
\BIBentryALTinterwordspacing
A.~Duc, S.~Faust, and F.-X. Standaert, ``Making masking security proofs
  concrete - or how to evaluate the security of any leaking device,'' in
  \emph{Advances in Cryptology - {EUROCRYPT} 2015 - 34th Annual International
  Conference on the Theory and Applications of Cryptographic Techniques, Sofia,
  Bulgaria, April 26-30, 2015, Proceedings, Part {I}}, ser. Lecture Notes in
  Computer Science, E.~Oswald and M.~Fischlin, Eds., vol. 9056.\hskip 1em plus
  0.5em minus 0.4em\relax Springer, 2015, pp. 401--429. [Online]. Available:
  \url{https://doi.org/10.1007/978-3-662-46800-5\_16}
\BIBentrySTDinterwordspacing

\bibitem{FehrBerens14}
S.~Fehr and S.~Berens, ``On the conditional {R}{\'e}nyi entropy,'' \emph{IEEE
  Transactions on Information Theory}, vol.~60, pp. 6801--6810, 2014.

\bibitem{DBLP:conf/ches/GoubinP99}
\BIBentryALTinterwordspacing
L.~Goubin and J.~Patarin, ``{DES} and differential power analysis (the
  ``duplication'' method),'' in \emph{Cryptographic Hardware and Embedded
  Systems, First International Workshop, CHES'99, Worcester, MA, USA, August
  12-13, 1999, Proceedings}, ser. Lecture Notes in Computer Science,
  {\c{C}}.~K. Ko{\c{c}} and C.~Paar, Eds., vol. 1717.\hskip 1em plus 0.5em
  minus 0.4em\relax Springer, 1999, pp. 158--172. [Online]. Available:
  \url{https://doi.org/10.1007/3-540-48059-5\_15}
\BIBentrySTDinterwordspacing

\bibitem{Hirche2020}
C.~Hirche, ``{R}{\'e}nyi bounds on information combining,'' in \emph{2020 IEEE
  International Symposium on Information Theory (ISIT)}, 2020, pp. 2297--2302.

\bibitem{HoVerdu15}
S.-W. Ho and S.~Verd{\'u}, ``Convexity/concavity of {R}{\'e}nyi entropy and
  $\alpha$-mutual information,'' in \emph{2015 IEEE International Symposium on
  Information Theory (ISIT)}, 2015, pp. 745--749.

\bibitem{Issa20}
I.~Issa, A.~B. Wagner, and S.~Kamath, ``An operational approach to information
  leakage,'' \emph{IEEE Transactions on Information Theory}, vol.~66, no.~3,
  pp. 1625--1657, 2020.

\bibitem{DBLP:conf/ccs/ItoUH22}
\BIBentryALTinterwordspacing
A.~Ito, R.~Ueno, and N.~Homma, ``On the success rate of side-channel attacks on
  masked implementations: {i}nformation-theoretical bounds and their practical
  usage,'' in \emph{Proceedings of the 2022 {ACM} {SIGSAC} Conference on
  Computer and Communications Security, {CCS} 2022, Los Angeles, CA, USA,
  November 7-11, 2022}, H.~Yin, A.~Stavrou, C.~Cremers, and E.~Shi, Eds.\hskip
  1em plus 0.5em minus 0.4em\relax {ACM}, 2022, pp. 1521--1535. [Online].
  Available: \url{https://doi.org/10.1145/3548606.3560579}
\BIBentrySTDinterwordspacing

\bibitem{Jog2013TheEPI}
V.~Jog and V.~Anantharam, ``The entropy power inequality and {M}rs. {G}erber's
  lemma for groups of order $2^n$,'' \emph{2013 IEEE International Symposium on
  Information Theory (ISIT)}, pp. 594--598, 2013.

\bibitem{LCGR21}
\BIBentryALTinterwordspacing
Y.~Liu, W.~Cheng, S.~Guilley, and O.~Rioul, ``On conditional alpha-information
  and its application to side-channel analysis,'' in \emph{{IEEE} Information
  Theory Workshop, {ITW} 2021, Kanazawa, Japan, October 17-21, 2021}.\hskip 1em
  plus 0.5em minus 0.4em\relax {IEEE}, 2021, pp. 1--6. [Online]. Available:
  \url{https://doi.org/10.1109/ITW48936.2021.9611409}
\BIBentrySTDinterwordspacing

\bibitem{madiman2021entropy}
M.~Madiman, L.~Wang, and J.~O. Woo, ``Entropy inequalities for sums in prime
  cyclic groups,'' \emph{SIAM Journal on Discrete Mathematics}, vol.~35, no.~3,
  pp. 1628--1649, 2021.

\bibitem{Marshall1980InequalitiesTO}
A.~W. Marshall, I.~Olkin, and B.~C. Arnold, \emph{Inequalities: {T}heory of
  Majorization and Its Applications}.\hskip 1em plus 0.5em minus 0.4em\relax
  Springer, 1980.

\bibitem{DBLP:journals/iacr/MasureRS22}
\BIBentryALTinterwordspacing
L.~Masure, O.~Rioul, and F.-X. Standaert, ``A nearly tight proof of {D}uc et
  al.'s conjectured security bound for masked implementations,'' \emph{{IACR}
  Cryptol. ePrint Arch.}, p. 600, 2022. [Online]. Available:
  \url{https://eprint.iacr.org/2022/600}
\BIBentrySTDinterwordspacing

\bibitem{Polyanskiy2010}
Y.~Polyanskiy and S.~Verd{\'u}, ``Arimoto channel coding converse and
  {R}{\'e}nyi divergence,'' in \emph{2010 48th Annual Allerton Conference on
  Communication, Control, and Computing (Allerton)}, 2010, pp. 1327--1333.

\bibitem{rioul2021primer}
O.~Rioul, ``A primer on alpha-information theory with application to leakage in
  secrecy systems,'' in \emph{International Conference on Geometric Science of
  Information}.\hskip 1em plus 0.5em minus 0.4em\relax Springer, 2021, pp.
  459--467.

\bibitem{ShamaiWyner90}
S.~Shamai and A.~Wyner, ``A binary analog to the entropy-power inequality,''
  \emph{IEEE Transactions on Information Theory}, vol.~36, no.~6, pp.
  1428--1430, 1990.

\bibitem{Sibson69}
R.~Sibson, ``Information radius,'' \emph{Zeitschrift f{\"u}r
  Wahrscheinlichkeitstheorie und verwandte Gebiete}, vol.~14, pp. 149--160,
  1969.

\bibitem{Tao2009SumsetAI}
T.~Tao, ``Sumset and inverse sumset theory for {S}hannon entropy,''
  \emph{Combinatorics, Probability and Computing}, vol.~19, pp. 603 -- 639,
  2009.

\bibitem{teixeira2012}
A.~Teixeira, A.~Matos, and L.~Antunes, ``Conditional {R}{\'e}nyi entropies,''
  \emph{IEEE Transactions on Information Theory}, vol.~58, no.~7, pp.
  4273--4277, 2012.

\bibitem{ErvenHarremos14}
T.~van Erven and P.~Harremos, ``R{\'e}nyi divergence and kullback-leibler
  divergence,'' \emph{IEEE Transactions on Information Theory}, vol.~60, no.~7,
  pp. 3797--3820, 2014.

\bibitem{Verdu15}
\BIBentryALTinterwordspacing
S.~Verd{\'u}, ``$\alpha$-mutual information,'' in \emph{IEEE Information Theory
  and Applications Workshop (ITA2015)}, San Diego, USA, 2015, pp. 1--6.
  [Online]. Available: \url{https://doi.org/10.1109/ITA.2015.7308959}
\BIBentrySTDinterwordspacing

\bibitem{Wyner1973ATO}
A.~D. Wyner and J.~Ziv, ``A theorem on the entropy of certain binary sequences
  and applications-{I},'' \emph{IEEE Transactions on Information Theory},
  vol.~19, pp. 769--772, 1973.

\end{thebibliography}

\cleardoublepage


\begin{appendix}

	\subsection{Background on Majorization}\label{majorization}
	
	We recall definitions and basic results of majorization theory. An extensive presentation can be found in the reference textbook \cite{Marshall1980InequalitiesTO}.
	
	\begin{definition}[Statistical Ordering] If $\mathbf{p} = (p_1,\ldots,p_M)$ is a probability mass function, an arrangement $(1),(2),\ldots,(M)$ of
		$\mathbf{p}$ so that $p_{(1)} \geq \ldots \geq p_{(M)}$ is said to be the statistical ordering of $\mathbf{p}$.
		The associated cumulative mass function is noted $P_{(i)} = p_{(1)} + \ldots + p_{(i)}$ where $P_{(0)}=0$ by convention.
	\end{definition}
	
	\begin{definition}[Majorization] Let $\mathbf{p},\mathbf{q}$ be two probability mass functions. We say that $\mathbf{q}$ majorizes $\mathbf{p}$
		and write $\mathbf{p} \preceq \mathbf{q}$ if
		\begin{equation}
			P_{(i)} \leq Q_{(i)} \qquad (i=1,\ldots,M).
		\end{equation}
		This partial order on the probability mass functions quantifies whether a distribution is more spread out than the other.
	\end{definition}
	
	\begin{definition}[Schur-Convexity] $f : \mathbf{p} \mapsto f(\mathbf{p}) \in \mathbb{R}$ is said to be Schur-convex if it is increasing with respect to majorization i.e. $\mathbf{p} \preceq \mathbf{q} \implies f(\mathbf{p}) \leq f(\mathbf{q})$.
	\end{definition}

	\begin{lemma}[Schur-Convex Combination] If $\alpha_1 \geq \ldots \geq \alpha_M$ then $(p_1,\ldots,p_M) \mapsto \sum_{i=1}^M \alpha_i p_{(i)}$ is Schur-convex.
		\label{schur-combination}
	\end{lemma}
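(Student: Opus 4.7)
The plan is to prove Schur-convexity by rewriting the linear functional $f(\mathbf{p}) = \sum_{i=1}^M \alpha_i p_{(i)}$ as a non-negative linear combination of the cumulative mass values $P_{(i)}$, since majorization is defined precisely in terms of these cumulatives. Once this is done, the conclusion $\mathbf{p} \preceq \mathbf{q} \Rightarrow f(\mathbf{p}) \leq f(\mathbf{q})$ will follow immediately from the pointwise inequality $P_{(i)} \leq Q_{(i)}$.

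Concretely, first I would apply Abel summation (summation by parts) to the defining sum. Writing $p_{(i)} = P_{(i)} - P_{(i-1)}$ with the convention $P_{(0)} = 0$, one obtains
\begin{equation}
\sum_{i=1}^M \alpha_i p_{(i)} = \sum_{i=1}^{M-1} (\alpha_i - \alpha_{i+1}) P_{(i)} + \alpha_M P_{(M)}.
\end{equation}
The assumption $\alpha_1 \geq \alpha_2 \geq \cdots \geq \alpha_M$ ensures that each coefficient $\alpha_i - \alpha_{i+1}$ is non-negative, and $P_{(M)} = 1$ is constant across all probability mass functions, so the last term contributes the same constant $\alpha_M$ to both $f(\mathbf{p})$ and $f(\mathbf{q})$.

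Next, I would invoke the definition of majorization: $\mathbf{p} \preceq \mathbf{q}$ means $P_{(i)} \leq Q_{(i)}$ for every $i = 1, \ldots, M$. Multiplying each such inequality by the non-negative weight $\alpha_i - \alpha_{i+1}$ and summing yields
\begin{equation}
\sum_{i=1}^{M-1} (\alpha_i - \alpha_{i+1}) P_{(i)} \leq \sum_{i=1}^{M-1} (\alpha_i - \alpha_{i+1}) Q_{(i)},
\end{equation}
and adding the common constant $\alpha_M$ to both sides gives $f(\mathbf{p}) \leq f(\mathbf{q})$, which is exactly Schur-convexity.

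There is essentially no hard step here; the only subtlety worth flagging is that the rearrangement $p_{(i)}$ depends on $\mathbf{p}$, so $f$ is not literally the linear form $\sum \alpha_i p_i$ but rather its symmetrized version evaluated on the decreasing rearrangement. The Abel summation trick sidesteps this by re-expressing everything in terms of the sorted partial sums $P_{(i)}$, which are exactly the quantities on which majorization is defined, so no case analysis on the underlying permutation is needed.
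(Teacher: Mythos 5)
Your proposal is correct and uses exactly the same Abel summation (summation by parts) argument as the paper, rewriting the sum as $\alpha_M + \sum_{i=1}^{M-1}(\alpha_i-\alpha_{i+1})P_{(i)}$ with non-negative coefficients and invoking the definition of majorization. No difference in substance.
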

	
	\begin{proof} This can be shown by an Abel transform as pointed out in~\cite[Remark 2]{DBLP:conf/dsd/BeguinotCGR22}.
		\begin{align}
			\sum \alpha_i p_{(i)}
			&= \sum \alpha_i (P_{(i)} - P_{(i-1)})
			\\
			&= \alpha_M P_{(M)} - \alpha_1 P_{(0)} - \sum_{i=1}^{M-1} (\alpha_{i+1}-\alpha_i) P_{(i)}
			\\
			&= \alpha_M - \sum_{i=1}^{M-1} (\alpha_{i+1}-\alpha_i) P_{(i)}.
		\end{align}
		Since $\alpha_{i+1}-\alpha_i \leq 0$ the Schur-convexity follows from the definition.
	\end{proof}
	
	\begin{lemma}[Majorization and min-entropy] Let $\mathbf{p}$ be a probability mass functions whose min-entropy is equal to $-\log p$ and $k = \lfloor p^{-1} \!\rfloor$ then
		\begin{equation}
			(p,\frac{1\!-\!p}{M\!-\!1},\ldots,\frac{1\!-\!p}{M\!-\!1}) \preceq \mathbf{p} \preceq (p,\ldots,p,1\!-\!kp,0,\ldots,0).
		\end{equation}
		\label{majorization-lemma}
	\end{lemma}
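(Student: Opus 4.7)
The plan is to verify the two majorization inequalities directly from their definition by comparing cumulative mass functions after statistical ordering. The only facts needed are that $p_{(1)} = p$ (from the min-entropy hypothesis) and that $k = \lfloor p^{-1} \rfloor$ satisfies $k \leq p^{-1} < k+1$, hence $1/(k+1) < p \leq 1/k$.

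First I would check that both reference distributions are valid pmfs that are already in statistical order. The upper reference $(p,\ldots,p,1-kp,0,\ldots,0)$, with $k$ copies of $p$, has total mass $kp + (1-kp) = 1$, and its decreasing ordering requires $1 - kp \leq p$, i.e.\ $(k+1)p \geq 1$, which follows from $k+1 > p^{-1}$. The lower reference $(p,\tfrac{1-p}{M-1},\ldots,\tfrac{1-p}{M-1})$ has total mass $p + (M-1)\tfrac{1-p}{M-1} = 1$, and the decreasing ordering requires $p \geq \tfrac{1-p}{M-1}$, i.e.\ $p \geq 1/M$, which holds trivially since the maximum of any pmf on $M$ atoms is at least $1/M$.

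For the upper majorization $\mathbf{p} \preceq (p,\ldots,p,1-kp,0,\ldots,0)$, I would compare the CDFs index by index. When $i \leq k$, the reference CDF equals $ip$, and $P_{(i)} = \sum_{j=1}^{i} p_{(j)} \leq ip$ because every $p_{(j)} \leq p_{(1)} = p$. When $i \geq k+1$, the reference CDF is already $1$, which trivially dominates. For the lower majorization $(p,\tfrac{1-p}{M-1},\ldots,\tfrac{1-p}{M-1}) \preceq \mathbf{p}$, the case $i=1$ yields equality. For $i \geq 2$, the required inequality rearranges to $\sum_{j=2}^{i} p_{(j)} \geq (i-1)\tfrac{1-p}{M-1}$, which is a standard averaging argument: the sequence $p_{(2)},\ldots,p_{(M)}$ is nonincreasing and sums to $1-p$, so the average of its largest $i-1$ terms is at least the overall average $\tfrac{1-p}{M-1}$.

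I do not anticipate any serious obstacle. The proof reduces to two elementary observations: among a nonincreasing sequence every term is bounded by the maximum (for the upper direction), and the leading partial sums grow at least as fast as the uniform average (for the lower direction). The only care required is in verifying that the reference pmfs are in statistical order, which hinges on the inequalities $p > 1/(k+1)$ and $p \geq 1/M$ noted above.
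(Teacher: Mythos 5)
Your proof is correct. The paper actually states this lemma without proof (it is presented as a standard fact from majorization theory, with the general reference to Marshall--Olkin at the top of the appendix), and your direct verification via cumulative sums --- bounding each $p_{(j)}$ by $p_{(1)}=p$ for the upper relation, and using the averaging property of the nonincreasing tail $p_{(2)},\ldots,p_{(M)}$ for the lower one --- is exactly the canonical argument, including the correct checks that $p>1/(k+1)$ and $p\geq 1/M$ guarantee the two reference vectors are in statistical order.
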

	
	\begin{lemma}[Rearrangement Inequality] Let $(a_1,\ldots,a_n)$, $(b_1,\ldots,b_n) \in \mathbb{R}^{+n}$ be two sequences in descending order. Then for all permutations
		$\sigma$ of $\{1,\ldots,n\}$ it holds that
		\begin{equation}
			\sum a_i b_{n+1-i} \leq \sum a_i b_{\sigma(i)} \leq \sum a_i b_i.
		\end{equation}
		\label{rearrangment-lemma}
	\end{lemma}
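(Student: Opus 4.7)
The plan is to prove this by the classical exchange argument. I will establish the upper bound first, namely $\sum a_i b_{\sigma(i)} \leq \sum a_i b_i$, by showing that any deviation from the identity permutation can be remedied by a sequence of adjacent transpositions, each of which weakly increases the sum. The lower bound $\sum a_i b_{n+1-i} \leq \sum a_i b_{\sigma(i)}$ then follows by a symmetric argument in which we instead swap to drive $\sigma$ towards the reversal $i \mapsto n+1-i$. Note that positivity of the $a_i, b_i$ is never actually used; the proof works verbatim over $\mathbb{R}$.

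For the upper bound, let $\sigma$ be any permutation of $\{1,\ldots,n\}$. Call a pair $(i,j)$ with $i<j$ an \emph{inversion} of $\sigma$ if $\sigma(i) > \sigma(j)$. Suppose $\sigma$ is not the identity, so at least one inversion exists. Pick such a pair $(i,j)$ and let $\sigma'$ be obtained from $\sigma$ by swapping the values $\sigma(i)$ and $\sigma(j)$. Then
\begin{equation}
\sum_{k} a_k b_{\sigma'(k)} - \sum_{k} a_k b_{\sigma(k)}
= (a_i - a_j)\bigl(b_{\sigma(j)} - b_{\sigma(i)}\bigr) \geq 0,
\end{equation}
since $i<j$ yields $a_i \geq a_j$ and $\sigma(i)>\sigma(j)$ yields $b_{\sigma(j)} \geq b_{\sigma(i)}$. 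Thus each such swap weakly increases the sum. One checks that $\sigma'$ has strictly fewer inversions than $\sigma$ (specifically one can always choose $(i,j)$ adjacent, $j=i+1$, so exactly one inversion is removed without creating new ones). Iterating this procedure terminates after finitely many steps at the identity permutation, which therefore achieves the maximum value $\sum_i a_i b_i$.

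For the lower bound, apply the identical argument but with the roles reversed: call $(i,j)$ with $i<j$ a \emph{non-inversion} if $\sigma(i) < \sigma(j)$, and swap adjacent non-inversions. The same computation shows each such swap weakly \emph{decreases} the sum, and iteration terminates when no non-inversions remain, i.e., when $\sigma(i) = n+1-i$, giving the minimum value $\sum_i a_i b_{n+1-i}$. Since there is no genuine obstacle here beyond the standard bookkeeping of the exchange argument, the proof is routine; the only subtlety worth emphasizing in the write-up is that restricting the swaps to adjacent pairs guarantees the inversion count is a strict monovariant, ensuring termination.
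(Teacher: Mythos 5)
Your proof is correct. The upper bound via adjacent transpositions is sound: the key identity $\sum_k a_k b_{\sigma'(k)} - \sum_k a_k b_{\sigma(k)} = (a_i - a_j)(b_{\sigma(j)} - b_{\sigma(i)})$ is right, both factors are nonnegative under your hypotheses, and restricting to adjacent inversions makes the inversion count a strict monovariant so the process terminates at the identity. The lower bound by symmetry is equally fine, and your remark that positivity of the sequences is never used is accurate. However, the paper does not actually prove this lemma at all: it simply defers to the textbook of Marshall and Olkin, noting that the proof there goes through majorization (which is the organizing theme of that appendix, where the lemma sits alongside Schur-convexity and the majorization bounds on min-entropy). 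So your route is genuinely different in spirit: the exchange argument is elementary and self-contained, requiring no machinery beyond a two-term swap, whereas the majorization route the paper points to integrates the lemma into the same framework used for Lemmas~\ref{schur-combination} and~\ref{majorization-lemma}, at the cost of invoking that heavier theory. For the purposes of this paper either justification suffices; yours has the advantage of being checkable on the spot and of making explicit that the nonnegativity hypothesis in the statement is superfluous.
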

	\begin{proof}
		See \cite{Marshall1980InequalitiesTO} for a proof using majorization.
	\end{proof}

	\subsection{Proof of Lemma~\ref{lemma1}}\label{proof1}
	\textcolor{blueR}{
	\textbf{Method 1}:\\
	By Theorem 6 of~\cite{ErvenHarremos14} we have 
	\begin{align}
		\lim\limits_{\alpha \to \infty}  \langle p_{X|YZ}\|p_{X|Z}\rangle_\alpha = \exp \big( D_{\infty} (P_{X|YZ}\|P_{X|Z}) \big) \\
		= \max_{x:p_{X|Z}(x|z)>0} \frac{p_{X|YZ}}{p_{X|Z}}.
	\end{align}
	Because $p_{Y|Z}\cdot p_{X|YZ}/p_{X|Z} =p_{Y|XZ}$, the proof is finished.\\
	\textbf{Method 2}:} \\
	We use $L^{\infty}$-norm to prove this lemma.
	\begin{align}
		p_{Y|Z} & \langle  p_{X|YZ} \|p_{X|Z}\rangle_\alpha = p_{Y|Z} ~\bigl( \sum_{x \in \mathcal{X}} p^{\alpha}_{X|YZ} ~p^{1-\alpha}_{X|Z} \bigr) ^{\frac{1}{\alpha}} \notag \\
		&= \bigl( \sum_{x \in \mathcal{X}} p^{\alpha}_{XY|Z}~ p^{1-\alpha}_{X|Z} \bigr) ^{\frac{1}{\alpha}} = \Bigl( \sum_{x \in \mathcal{X}} \bigl( p_{XY|Z}~ p^{\frac{1-\alpha}{\alpha}}_{X|Z} \bigr)^{\alpha} \Bigr) ^{\frac{1}{\alpha}} \notag \\
		&=  \Bigl( \sum_{x \in \mathcal{X}} \bigl( p_{Y|XZ}~ p^{\frac{1}{\alpha}}_{X|Z} \bigr)^{\alpha} \Bigr) ^{\frac{1}{\alpha}}. \label{expression}
	\end{align}
	For any $\varepsilon >0$, there exists a sufficiently large $\alpha >0$ such that
	\begin{equation} \label{ineq}
		p_{Y|XZ} -\varepsilon \leq p_{Y|XZ}~ p^{\frac{1}{\alpha}}_{X|Z} \leq p_{Y|XZ}.
	\end{equation}
	Because $\mathcal{X}$ is finite, one always has a sufficiently large $\alpha >0$ such that \eqref{ineq} holds for any $x \in \mathcal{X}$.
	By $L^{\infty}$-norm we have
	\begin{equation}
		\lim_{\alpha \to \infty}   \Bigl(\! \sum_{x:p_{X|Z}(x|z)>0} \hspace{-1em} \bigl( p_{Y|XZ} -\varepsilon  \bigr)^{\alpha} \Bigr) ^{\frac{1}{\alpha}} = \hspace{-1em} \max_{x:p_{X|Z}(x|z)>0} p_{Y|XZ} -\varepsilon
	\end{equation}
	Since $\varepsilon >0$ is arbitrary, combined with the squeeze theorem, the proof is finished.
	
	\subsection{Proof of Lemma~\ref{lemma2}}\label{proof2}
	\textcolor{blueR}{
By definition we have
\begin{align}
	& \lim\limits_{\alpha \to \infty} \log \mathbb{E}_{YZ} \langle p_{X|YZ}\|p_{X|Z}\rangle_\alpha  \notag\\
	&= \lim\limits_{\alpha \to \infty} \log \mathbb{E}_{YZ} \exp \bigl(\tfrac{\alpha-1}{\alpha} D_{\alpha} \langle p_{X|YZ}\|p_{X|Z}\rangle_\alpha\bigr) . 
\end{align}
This value is bounded because $I(X;Y|Z) \leq \log M$. Since $\tfrac{\alpha-1}{\alpha}D_{\alpha} \langle p_{X|YZ}\|p_{X|Z}\rangle_\alpha$ is increasing in $\alpha$, the lemma follows from the monotone convergence theorem.
	 \hfill\IEEEQED  }
	
	\subsection{Proof of Theorem~\ref{thm:extended-gerber-infty}}\label{thm:proof-extended-gerber-infty}
	
	We prove the inequality by induction. Theorem~\ref{thm:gerber-infty} settles the case of $d+1=2$ variables. We assume it is true for all sets of at most $d+1$ variables and show it is true for all set of at most $d+2$ variables.  Let $k,r_{d+1}$ be the value of $k,r$ in the theorem associated to $X_0,\ldots,X_{d},X_{d+1}$. If $r_{d+1} < d+1$ we lower bound the min-entropy of the sum $X_0 \oplus \ldots \oplus X_{d+1}$ by the entropy of
	$X_0 \oplus \ldots \oplus X_{d}$. We conclude by applying the induction hypothesis to this sum of $d$ random variables. Else
	$r_{d+1} = d+1$. Since $X_0 \oplus \ldots \oplus X_d \oplus X_{d+1} = (X_0 \oplus \ldots \oplus X_{d}) \oplus X_{d+1}$, we apply the induction hypothesis {\color{blueR} $\mathcal{H}_d$ } to $X_0,\ldots,X_d$ then we apply Theorem.~\ref{thm:gerber-infty} to $X_{d+1}$ and $X_{0} \oplus \ldots \oplus X_d$. Let $K = \exp(\!-\!H_\infty(X_0 \oplus\ldots\oplus X_{d+1}|Y_0\ldots Y_{d+1}))$.
	\begin{align}
		K
		&\overset{(a)}{\leq} 1 - k (p_{d+1} + \exp(\!-\!H_\infty(X_0\!\oplus\!\ldots\!\oplus\!X_d|Y_0\!\ldots\!Y_d))) \nonumber
		\\
		&+ k(k+1) p_{d+1} \exp(\!-\!H_\infty(X_0\!\oplus\!\ldots\!\oplus\!X_d|Y_0\!\ldots\!Y_d))
		\\
		&\overset{(b)}{\leq} 1 - k( p_{d+1} + \frac{1}{k+1} + \frac{k^d}{k+1} \prod_{i=0}^d ((k+1)p_i-1)) \nonumber
		\\
		&+ k(k+1)p_{d+1} (\frac{1}{k+1} + \frac{k^d}{k+1} \prod_{i=0}^d ((k+1)p_i-1))
		\\
		&= \frac{1}{k+1} + \frac{k^{d+1}}{k+1} \prod_{i=0}^d ((k+1)p_i-1)) ((k+1)p_{d+1}-1)
		\\
		&= \frac{1}{k+1} + \frac{k^{d+1}}{k+1} \prod_{i=0}^{d+1} ((k+1)p_i-1))
	\end{align} {\color{blueR} where $(a)$ holds by $\mathcal{H}_1$ and $(b)$ holds by $\mathcal{H}_d$.}
	As a repeated application of Theorem~\ref{thm:gerber-infty} the inequality naturally extends to the conditional case.\hfill\IEEEQED

	\subsection{Proof of Theorem~\ref{thm:taylor1}}\label{proof-taylor1}
	
	We upper bound $I_d= \log M - H_\infty(X|\mathbf{Y})$ using the lower bound on
	the min entropy. At high entropy $k=M-1$ hence
	\begin{equation}
		\log M - I_d \geq -\log\biggl(\frac{1}{M} + \frac{(M-1)^d}{M} \prod_{i=0}^d (Mp_i-1)\biggl)
	\end{equation}
	where
	\begin{equation}
		p_i = \frac{\exp(I_\infty(X_i;Y_i))}{M}.
	\end{equation}
	
	\begin{align}
		I_d
		&\leq  \log\biggl( 1 + (M-1)^d \prod_{i=0}^d \bigl(\exp(I_\infty(X_i;Y_i)) -1\bigl)\biggl)
		\\
		&= (\!M\!-\!1\!)^d \! (\ln 2)^d \! \prod_{i=0}^d I_\infty(X_i;Y_i) \!+\! o\biggl( \prod_{i=0}^d I_\infty(\!X_i;Y_i\!) \!\biggl)
	\end{align}
	\hfill\IEEEQED
	
	\subsection{Proof of Theorem~\ref{thm:refined-extension}}\label{proof-refined}
	
	{\color{blueR} We first prove the following usefull lemma. It} intuitively tells that to minimize the min-entropy the pmf should not spread out to other values. For instance when the summed random variables are in a sub-group the value of their sum is confined in this sub-group.
	\begin{lemma} If $X_0$ and $X_1$ have pmfs up to permutation $(q,\ldots,q,1-kq,0,\ldots,0)$ and $(p,\ldots,p,1-kp,0,\ldots,0)$
		then the pmf of $X_0 \oplus X_1$ is majorized by the pmf $(r,\ldots,r,1-kr,0,\ldots,0)$ where $r=p+q-(k+1)pq$. There is equality when $X_0,X_1$ are supported on the coset of a subgroup of $\mathcal{G}$ of order $k+1$.
		\label{convolution-lemma}
	\end{lemma}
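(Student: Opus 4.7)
The plan is to reduce the claim to an explicit combinatorial inequality via a mixture decomposition of each $X_i$, and then close the inequality by analyzing the structure of the sumset $A'\oplus B'$.

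First, since min-entropy and the majorization relation depend only on the multiset of probabilities, I would shift $X_0$ and $X_1$ by translation so that the unique low-probability atom of each sits at $0\in\mathcal{G}$. Write $A'$ (resp.\ $B'$) for the $k$-element set carrying probability $p$ (resp.\ $q$). Decomposing each $X_i$ as a mixture of the uniform distribution on its heavy support (weight $kp_i$) with the point mass $\delta_0$ (weight $1-kp_i$) turns $P_{X_0\oplus X_1}$ into a four-term convex combination, yielding the explicit formula
$$
P_{X_0\oplus X_1}(x)=pq\,n(x)+p(1-kq)\,\mathbf{1}_{A'}(x)+(1-kp)q\,\mathbf{1}_{B'}(x)+(1-kp)(1-kq)\,\mathbf{1}_{\{0\}}(x),
$$
where $n(x)=|\{(a,b)\in A'\times B':a\oplus b=x\}|\le k$.

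Majorization by the target pmf $(r,\dots,r,1{-}kr,0,\dots,0)$ is equivalent to proving that for every $S\subseteq\mathcal{G}$ with $|S|=s\le k+1$, the mass $P(X_0\oplus X_1\in S)$ is at most $1-(k+1-s)\,r$, which is the top-$s$ cumulative of the target. Summing the four contributions above over $S$ and using the elementary bounds $\sum_{x\in S}n(x)\le k\,s$, $|A'\cap S|\le\min(k,s)$, $|B'\cap S|\le\min(k,s)$, and $\mathbf{1}_{0\in S}\le 1$ leaves an inequality I would close by a case analysis on which of the atoms $\{0\}$, $A'$, $B'$ intersect $S$, together with the algebraic identity $1-kr=kpq+(1-kp)(1-kq)$ which aligns the mixture weights with the target's sum profile.

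The main obstacle is that the term-by-term bounds are not simultaneously saturable, so a naive addition overshoots the target. The resolution is the structural observation that whenever $\sum_{x\in S}n(x)=k|S|$, every $a\in A'$ must satisfy $a\oplus B'\subseteq S$, which forces $A'\oplus B'\subseteq S$ and rules out the simultaneous saturation of the atomic terms. This same tightness analysis delivers the equality case: saturation throughout forces $A'\cup\{0\}=B'\cup\{0\}$ to be a subgroup $H\le\mathcal{G}$ of order $k+1$, so that (after undoing the initial translation) $X_0$ and $X_1$ are uniformly supported on a common coset of $H$, and a direct computation then recovers the target pmf $(r,\dots,r,1{-}kr,0,\dots,0)$ exactly for $P_{X_0\oplus X_1}$.
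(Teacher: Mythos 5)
Your reduction is sound as far as it goes: translating the light atoms to $0$, the four-term mixture decomposition of $P_{X_0\oplus X_1}$, and the equivalence of the majorization claim with the family of inequalities $P(X_0\oplus X_1\in S)\le 1-(k+1-|S|)r$ are all correct, and you rightly observe that adding the term-by-term bounds overshoots. The gap is the case analysis you defer to the end: it cannot be completed, because the inequality you are trying to prove is false. Take $\mathcal{G}=\mathbb{Z}_4$, $k=2$, $p=q\in(\tfrac{2}{5},\tfrac{1}{2})$, and give both variables heavy support $A'=B'=\{1,3\}$ (a coset of the order-$2$ subgroup) with the atom of mass $1-2p$ at $0$. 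Then $A'\oplus B'=\{0,2\}=:S$, $n(0)=n(2)=2$, and $P(X_0\oplus X_1\in S)=4p^2+(1-2p)^2$, while the required bound for $|S|=2$ is $1-r=1-2p+3p^2$; the difference is $p(5p-2)>0$. Note that your structural observation does hold here: $\sum_{x\in S}n(x)=k|S|$ is saturated and consequently $|A'\cap S|=|B'\cap S|=0$. But the $\delta_0$ term still lands in $S$, and $k^2pq+(1-kp)(1-kq)$ already exceeds $1-r$ on its own, so ruling out simultaneous saturation of the atomic terms does not rescue the bound. In general, whenever $A'$ and $B'$ are cosets of a common subgroup of order $k$, the convolution piles $k^2pq$ onto only $k$ points, and for $p,q$ near $1/k$ this is \emph{more} concentrated than the target $(r,\ldots,r,1-kr,0,\ldots,0)$, which is extremal only among configurations supported on a coset of a subgroup of order $k+1$.

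This is not a defect of your particular strategy but of the statement: the paper's own proof, a rearrangement/counting argument asserting that the second through $(k+1)$-th largest output masses are each at most $r=(k-1)pq+p(1-kq)+q(1-kp)$, is defeated by the same configuration (there the second largest mass is $kpq=2p^2>r$). What does survive is the top-$1$ partial sum, i.e.\ $\max_x P(X_0\oplus X_1=x)\le\max\{1-kr,\,r\}$, since a single output can collect $k$ products $pq$ together with the special-special product only at $x=s_0\oplus s_1$, and collecting both special products alongside $k$ heavy products at one point is impossible; this is all that is needed for the two-summand bound of Theorem~\ref{thm:gerber-infty}. The full majorization, however, is what gets iterated through Schur-convexity in the proof of Theorem~\ref{thm:refined-extension} for $d\ge 2$, so the lemma would need an additional hypothesis excluding heavy supports contained in a coset of a subgroup of order at most $k$ (compare the paper's own Cauchy--Davenport remark in the appendix), or a weaker conclusion. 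If you want a correct proof, you must first repair the statement.
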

	
	\begin{proof} The convolution involves $(k+1)^2$ strictly positive terms. Namely
		\begin{equation}
			\begin{cases}
				pq & k^2 \text{ times}
				\\
				p(1-kq) & k \text{ times }
				\\
				q(1-kp) & k \text{ times }
				\\
				(1-kp)(1-kq) & \text{ once }
			\end{cases}.
		\end{equation}
		Further, in each mass of the results they are at most $k+1$ terms that are added and at most once an expression containing $(1-kp)$ and $(1-kq)$. Let us assume that $q \geq 1-kq$ and $p \geq 1-kp$ or $1-kq \geq q$ and $1-kp\geq p$. By rearrangement inequality (Lemma~\ref{rearrangment-lemma}), $kpq + (1-kp)(1-kq)$ is the largest terms that can be obtained. The $2^{\text{nd}}$ to $(k\!+\!1)$-th largest terms are $(k\!-\!2)pq + p(1\!-\!kq)+q(1\!-\!kp)=p\!+\!q\!-\!(k\!+\!1)pq$. This majorizes all possible results since each term of the statistical ordering is maximized the sequence of cumulative mass function is also maximized. If $q \geq 1-kq$ and $1-kp \geq p$ or $1-kq\geq q$ and $p \geq 1-kp$ the proof is the same but the $1^{\text{st}}$ to $k$-th largest terms are $(k\!-\!2)pq + p(1\!-\!kq)+q(1\!-\!kp)=p\!+\!q\!-\!(k\!+\!1)pq$ and the $k+1$-th largest term is $kpq + (1-kp)(1-kq)$ which is the same pmf up to a permutation. The case of equality is clear.
	\end{proof}
	
	We derive the inequality of Thm.~\ref{thm:refined-extension}.
	{\color{blueR} The proof is composed of three steps. The first step is to prove that the inequality is achieved for 
	pmf of the form $\mathbf{p_j} = (p_{j},\ldots,p_{j},1-k_j p_{j},0,\ldots,0)$. The second step is to majorize the 
	resulting convolution by induction. The final steps is to conclude the majorization argument.
	}
	As in the case of two summands the problem is to maximize
	\begin{equation}
		\max_{x \in \mathcal{G}} \sum_{i_0,i_1,\ldots,i_{d-1} \in \mathcal{G}} \biggl(\prod_{j=0}^{d-1} \P(X_j = i_j)\biggl) \P(X_d=x\ominus\bigoplus_{j=0}^{d-1} i_j).
	\end{equation}
	
	Without loss of generality, we can assume that the maximum is reached in $x=0$, it remains to upper bound
	\begin{equation}
		\phi(\mathbf{p_0},\ldots,\mathbf{p_d}) \triangleq \hspace{-2em} \sum_{i_0,i_1,\ldots,i_{d-1}  \in \mathcal{G}} \biggl(\prod_{j=0}^{d-1} \P(X_j \!=\! i_j)\!\biggl) \P(X_d=\ominus\bigoplus_{j=0}^{d-1} i_j).
	\end{equation}
	
	We fix $\mathbf{p_1},\ldots,\mathbf{p_d}$. The maximization can be written as
	\begin{equation}
		\sum_{i_0 \in \mathcal{G}} \P(X_0=i_0) \alpha_{i_0}
	\end{equation}
	where
	\begin{equation}
		\alpha_{i_0} = \hspace{-0.6em}\sum_{i_1,\ldots,i_{d-1} \in \mathcal{G}} \biggl(\prod_{j=1}^{d-1} \P(X_j = i_j)\biggl) \P(X_d= \ominus\bigoplus_{j=0}^{d-1} i_j).
	\end{equation}
	
	This is equivalent to maximize
	\begin{equation}
		\sum_{i=1}^M \P(X_0=(i)) \alpha_{(i)}
		\label{eqn:alpha-i-sum}
	\end{equation}
	where $(1),\ldots,(M)$ are such that $\P(X_0=(1)) \geq \ldots \geq \P(X_0=(M))$. By rearrangement (Lemma~\ref{rearrangment-lemma}), \eqref{eqn:alpha-i-sum} is maximum when
	$ \alpha_{(1)} \geq \ldots \geq \alpha_{(M)}.$ By lemma~\ref{schur-combination} this mapping is Schur-Convex in $\mathbf{p_0}$ hence by lemma~\ref{majorization-lemma} it is maximized for statistical ordering of the probability mass function of $X_0$ of the form
	\begin{equation}
		\mathbf{p_0} = (p_0,\ldots,p_0,1-k_0p_0,0,\ldots,0)
		\label{eqn:shape}
	\end{equation}
	where $k_0 = \lfloor p_0^{-1} \!\rfloor$. Equation~\eqref{eqn:shape} does not depend on the fixed probability mass functions of $X_1,\ldots,X_d$. By symmetry, we also obtain that the for $j=0,\ldots,d$ the statistical ordering of the probability mass function of $X_j$ is of the form
	\begin{equation}
		\mathbf{p_j} = (p_j,\ldots,p_j,1-k_jp_j,0,\ldots,0)
	\end{equation}
	where $k_j = \lfloor p_j^{-1} \!\rfloor$. {\color{blueR} This concludes the first step of the proof. As previous proof we can further assume without loss of generality that $k_j$ is constant equal to $k$ for all $j$.} It remains to determine for which permutation of these probability mass function we obtain the lowest min-entropy.
		
	Now we fix the pmf $\mathbf{p_2},\ldots,\mathbf{p_d}$. And we consider the maximization with respect to the pmf of $X_0+X_1$. By lemma~\ref{schur-combination}, the expression is Schur-convex. Hence it is maximized for the least spread out pmf. By lemma \ref{convolution-lemma}, the pmf is majorized by $(r,\ldots,r,1-kr,0,\ldots,0)$ where \begin{equation}
		r=p+q-(k+1)pq.
		\label{eqn:r-rec}
	\end{equation}
	
	We can proceed by induction to majorize the sum of $d+1$ random variables. {\color{blueR}  Let $\mathcal{H}_d$ be the induction hypothesis:
	The probability mass function of the sum of $d+1$ random variables is majorized by
	$(r,\ldots,r,1-kr,0,\ldots,0)$ where
	\begin{equation}
		(k+1)r = 1 + (-1)^d \prod_{i=0}^d ((k+1)p_i-1).
	\end{equation} 
	The initialization $\mathcal{H}_1$ is true from \eqref{eqn:r-rec}. We assume $\mathcal{H}_j$ holds and proves $\mathcal{H}_{j+1}$ holds.
	Using \eqref{eqn:r-rec} with $\mathcal{H}_j$ we obtain that the convolution is majorized by $(r,\ldots,r,1-kr,0,\ldots,0)$ with 
	\begin{align}
		r
		&= p_{j+1} + \frac{1}{k+1} + \frac{(-1)^j}{k+1} \prod_{i=0}^j ((k+1)p_i-1) 
		\\
		&- (k+1)p_{j+1} \left( \frac{1}{k+1} + \frac{(-1)^j}{k+1} \prod_{i=0}^j ((k+1)p_i-1) \right)
		\\
		&= \frac{1}{k+1} + \frac{(-1)^j}{k+1} \prod_{i=0}^j ((k+1)p_i-1) (1 - (k+1)p_{j+1})
	\end{align}
	This proves $\mathcal{H}_{j+1}$ and we conclude by induction. This concludes the second step of the proof and it remains to conclude.
	}

	We proved that the probability mass function of the sum of $d+1$ random variables is majorized by
	$(r,\ldots,r,1-kr,0,\ldots,0)$ where
	\begin{equation}
		(k+1)r = 1 + (-1)^d \prod_{i=0}^d ((k+1)p_i-1).
		\label{eqn:r-refined}
	\end{equation}
	
	This shows that
	\begin{align*}
		\exp( \!-\! H_d )
		&\leq \!
		\begin{cases}
			r & \text{ if } $d$ \text{ is even}
			\\
			1 - k r & \text{ if } $d$ \text{ is odd}
		\end{cases}
		\\
		&\!= \!
		\begin{cases}
			\frac{1}{k+1} \!+\! \frac{1}{k+1} \prod_{j=0}^d ((k\!+\!1)p_i\!-\!1) & \hspace{-1em} \text{ ($d$ even) }
			\\
			\frac{1}{k+1} \!+\! \frac{k}{k+1} \prod_{j=0}^d ((k\!+\!1)p_i\!-\!1) & \hspace{-1em} \text{ ($d$ odd) }
		\end{cases}.
	\end{align*}
	\vspace*{-2ex}\hfill\IEEEQED
	
	\subsection{Proof of Proposition~\ref{prop:cosade-eq}}\label{proof:cosade-eq}

	Using Fano's inequality, de Chérisey et al. \cite[Eqn.~11]{CheriseyGuilleyRioulPiantanida19} have shown that
	\begin{equation}
		m I(X;\mathbf{Y}) \geq \log(M) - h(\P_s) - (1-\P_s)\log(M-1).
		\label{eq:eloi-linear}
	\end{equation}
	This can be explicited by computing a Taylor expansion of degree two of the binary entropy function in $\P_s=\tfrac{1}{M}$,
	\begin{align}
		h(\P_s)
		&= h(\tfrac{1}{M}) + h'(\tfrac{1}{M}) (\P_s-\tfrac{1}{M})
		\nonumber
		\\
		&+\frac{h''(\tfrac{1}{M})}{2} (\P_s-\tfrac{1}{M})^2 + o( (\P_s-\tfrac{1}{M})^2)
		\\
		&= \log(M) \!-\! (1\!-\!\frac{1}{M}) \log(M\!-\!1) \!+\! \log(M \!-\! 1) (\P_s \!-\! \tfrac{1}{M})
		\nonumber 
		\\
		&- \frac{M^2 \log(e)}{2(M-1)} (\P_s-\tfrac{1}{M})^2 + o( (\P_s-\tfrac{1}{M})^2).
	\end{align}
	In particular \eqref{eq:eloi-linear} reduces to 
	\begin{equation}
		m I(X;\mathbf{Y}) \geq \frac{M^2 \log(e)}{M-1} (\P_s-\tfrac{1}{M})^2 + o( (\P_s-\tfrac{1}{M})^2),
		\label{eq:lin-lin}
	\end{equation}
	\noindent
	where we leveraged the following equalities 
	\begin{equation}
		h(\tfrac{1}{M}) = \log(M) - (1-\frac{1}{M}) \log(M-1),
	\end{equation}
	\begin{equation}
		h'(\tfrac{1}{M})=\log(M-1) \text{ and } h''(\tfrac{1}{M})= \frac{- M^2 \log(e)}{M-1}.
	\end{equation}
	In particular,~\eqref{eq:lin-lin} shows that, 
	\begin{align}
		\P_s
		&\leq \frac{1}{M} + \sqrt{ \frac{2 \ln 2(M-1) m }{M^2} I(X,\mathbf{Y}) }
		\\
		&\approx \frac{1}{M} + \sqrt{m} A_d \sqrt{ \prod_{i=0}^d I(X_i,Y_i)  } & \text{ (with \cite[Eqn.~8]{DBLP:journals/iacr/BeguinotCGLMRS22})}
	\end{align}
	where
	\begin{equation}
		A_d = \frac{\sqrt{ (M-1) (2 \ln 2)^{d+1}}}{M}.
	\end{equation}
	\hfill\IEEEQED
	
	\subsection{Discussion on the Bound Optimality}\label{discussion}
	
	To investigate the bound tightness we compute and plot in Figs.~\ref{fig:z14}, \ref{fig:z13},  and \ref{fig:z5} the sequence $\mathbf{p}_d$ of pmf supported on a finite additive group $\mathcal{G}$ given by a fixed pmf $\mathbf{p}_0$ and the equation
	$\mathbf{p}_{d+1} = \mathbf{p}_d * \mathbf{p}_0$ where $*$ is the convolution with respect to the group $\mathcal{G}$. In other words, $\mathbf{p}_d$ is the pmf of the sum of $d+1$ i.i.d. $\mathcal{G}$-valued random variables with a law given by $\mathbf{p}_0$.
	
	Figs.~\ref{fig:z14} and~\ref{fig:z5} show that the presented bound is tight in two situations:
	\begin{enumerate}
		\item When the support of the random variables is in the coset of a sub-group of order $k+1$ the inequality is tight. This is the case in Fig.~\ref{fig:z14} as $\{\bar{0};\bar{7}\}$ is a finite sub-group of $\mathbb{Z}_{14}$ with two elements.
		\item In the high entropic regime, $k=M-1$ and there is always a sub-group, the group itself. This is the case in Fig.~\ref{fig:z5}.
	\end{enumerate}
	However, when there is no finite sub-group of order $k+1$ the inequality can be strictly violated as shown by Fig.~\ref{fig:z13}. Figs.~\ref{fig:z13} and~\ref{fig:z14} differs only by their group structure changed from $\mathbb{Z}_{14}$ to $\mathbb{Z}_{13}$, though the effect is huge on the actual entropy of the sum. Indeed $\{\bar{0};\bar{7}\}$ is not the coset of a sub-group of $\mathbb{Z}_{13}$, it is even spanning the whole group. As reported in \cite{madiman2021entropy} the Cauchy-Davenport inequality shows that for $A,B$ two subsets of $\mathbb{Z}_p$ (p prime), $|A+B| \geq \min \{ |A|+|B|-1, p \}$. As a consequence, the support of the sum must spread to the whole group very quickly. The investigation of this results may improve the presented inequalities.
	
	In Fig.~\ref{fig:z5}, we observe that the min-entropy does not increase visibly neither from $d=0$ to $d=1$ nor from $d=2$ to $d=3$.
	This supports the observation of Theorem~\ref{thm:refined-extension} that masking with odd order might not be relevant with respect to the worst case leakages as measured by the min-entropy.
	
	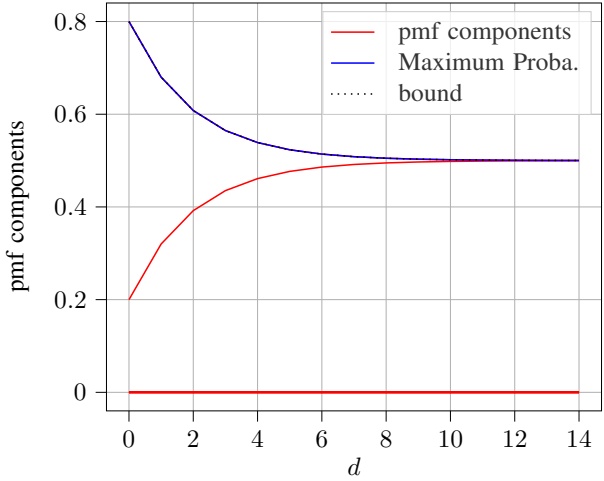
\begin{figure}[htbp!]
\begin{tikzpicture}[scale=0.95]

\definecolor{darkgray176}{RGB}{176,176,176}
\definecolor{lightgray204}{RGB}{204,204,204}

\begin{axis}[
legend cell align={left},
legend style={fill opacity=0.8, draw opacity=1, text opacity=1, draw=lightgray204},
tick align=outside,
tick pos=left,
x grid style={darkgray176},
xlabel={\(\displaystyle d\)},
xmajorgrids,
xmin=-0.7, xmax=14.7,
xtick style={color=black},
y grid style={darkgray176},
ylabel={pmf components},
ymajorgrids,
ymin=-0.04, ymax=0.84,
ytick style={color=black}
]
\addplot [semithick, red]
table {%
0 0.8
1 0.68
2 0.608
3 0.5648
4 0.53888
5 0.523328
6 0.5139968
7 0.50839808
8 0.505038848
9 0.5030233088
10 0.50181398528
11 0.501088391168
12 0.5006530347008
13 0.50039182082048
14 0.500235092492288
};
\addlegendentry{pmf components}
\addplot [semithick, red, forget plot]
table {%
0 0
1 0
2 0
3 0
4 0
5 0
6 0
7 0
8 0
9 0
10 0
11 0
12 0
13 0
14 0
};
\addplot [semithick, red, forget plot]
table {%
0 0
1 0
2 0
3 0
4 0
5 0
6 0
7 0
8 0
9 0
10 0
11 0
12 0
13 0
14 0
};
\addplot [semithick, red, forget plot]
table {%
0 0
1 0
2 0
3 0
4 0
5 0
6 0
7 0
8 0
9 0
10 0
11 0
12 0
13 0
14 0
};
\addplot [semithick, red, forget plot]
table {%
0 0
1 0
2 0
3 0
4 0
5 0
6 0
7 0
8 0
9 0
10 0
11 0
12 0
13 0
14 0
};
\addplot [semithick, red, forget plot]
table {%
0 0
1 0
2 0
3 0
4 0
5 0
6 0
7 0
8 0
9 0
10 0
11 0
12 0
13 0
14 0
};
\addplot [semithick, red, forget plot]
table {%
0 0
1 0
2 0
3 0
4 0
5 0
6 0
7 0
8 0
9 0
10 0
11 0
12 0
13 0
14 0
};
\addplot [semithick, red, forget plot]
table {%
0 0.2
1 0.32
2 0.392
3 0.4352
4 0.46112
5 0.476672
6 0.4860032
7 0.49160192
8 0.494961152
9 0.4969766912
10 0.49818601472
11 0.498911608832
12 0.4993469652992
13 0.49960817917952
14 0.499764907507712
};
\addplot [semithick, red, forget plot]
table {%
0 0
1 0
2 0
3 0
4 0
5 0
6 0
7 0
8 0
9 0
10 0
11 0
12 0
13 0
14 0
};
\addplot [semithick, red, forget plot]
table {%
0 0
1 0
2 0
3 0
4 0
5 0
6 0
7 0
8 0
9 0
10 0
11 0
12 0
13 0
14 0
};
\addplot [semithick, red, forget plot]
table {%
0 0
1 0
2 0
3 0
4 0
5 0
6 0
7 0
8 0
9 0
10 0
11 0
12 0
13 0
14 0
};
\addplot [semithick, red, forget plot]
table {%
0 0
1 0
2 0
3 0
4 0
5 0
6 0
7 0
8 0
9 0
10 0
11 0
12 0
13 0
14 0
};
\addplot [semithick, red, forget plot]
table {%
0 0
1 0
2 0
3 0
4 0
5 0
6 0
7 0
8 0
9 0
10 0
11 0
12 0
13 0
14 0
};
\addplot [semithick, red, forget plot]
table {%
0 0
1 0
2 0
3 0
4 0
5 0
6 0
7 0
8 0
9 0
10 0
11 0
12 0
13 0
14 0
};
\addplot [semithick, blue]
table {%
0 0.8
1 0.68
2 0.608
3 0.5648
4 0.53888
5 0.523328
6 0.5139968
7 0.50839808
8 0.505038848
9 0.5030233088
10 0.50181398528
11 0.501088391168
12 0.5006530347008
13 0.50039182082048
14 0.500235092492288
};
\addlegendentry{Maximum Proba.}
\addplot [semithick, black, dotted]
table {%
0 0.8
1 0.68
2 0.608
3 0.5648
4 0.53888
5 0.523328
6 0.5139968
7 0.50839808
8 0.505038848
9 0.5030233088
10 0.50181398528
11 0.501088391168
12 0.5006530347008
13 0.50039182082048
14 0.500235092492288
};
\addlegendentry{bound}
\end{axis}

\end{tikzpicture}
		\caption{Probability mass function of the sum of $d+1$ i.i.d. $\mathbb{Z}_{14}$-valued random variables with probability mass function $\mathbf{p}_0$ with $p_0=0.8$ and $p_7=0.2$.}
		\label{fig:z14}
	\end{figure}
	
	\begin{figure}[htbp!]
\begin{tikzpicture}[scale=0.95]

\definecolor{darkgray176}{RGB}{176,176,176}
\definecolor{lightgray204}{RGB}{204,204,204}

\begin{axis}[
legend cell align={left},
legend style={fill opacity=0.8, draw opacity=1, text opacity=1, draw=lightgray204},
tick align=outside,
tick pos=left,
x grid style={darkgray176},
xlabel={\(\displaystyle d\)},
xmajorgrids,
xmin=-0.7, xmax=14.7,
xtick style={color=black},
y grid style={darkgray176},
ylabel={pmf components},
ymajorgrids,
ymin=-0.04, ymax=0.84,
ytick style={color=black}
]
\addplot [semithick, red]
table {%
0 0.8
1 0.64
2 0.512
3 0.4096
4 0.32768
5 0.262144
6 0.2097152
7 0.16777216
8 0.134217728
9 0.1073741824
10 0.0858993459200001
11 0.068719476736
12 0.054975582208
13 0.04398047428608
14 0.035184427139072
};
\addlegendentry{pmf components}
\addplot [semithick, red, forget plot]
table {%
0 0
1 0.04
2 0.096
3 0.1536
4 0.2048
5 0.24576
6 0.2752512
7 0.29360128
8 0.301989888
9 0.301989888
10 0.2952790016
11 0.283467841536
12 0.2680059592704
13 0.25013889531904
14 0.230897441865728
};
\addplot [semithick, red, forget plot]
table {%
0 0
1 0
2 0
3 0.0016
4 0.0064
5 0.01536
6 0.028672
7 0.0458752
8 0.0660602880000001
9 0.0880803840000001
10 0.1107296256
11 0.13287555072
12 0.153545080832
13 0.17197049053184
14 0.18760417148928
};
\addplot [semithick, red, forget plot]
table {%
0 0
1 0
2 0
3 0
4 0
5 6.4e-05
6 0.0003584
7 0.00114688
8 0.002752512
9 0.005505024
10 0.00968884224000001
11 0.015502147584
12 0.0230317621248
13 0.03224446697472
14 0.04299262263296
};
\addplot [semithick, red, forget plot]
table {%
0 0
1 0
2 0
3 0
4 0
5 0
6 0
7 2.56e-06
8 1.8432e-05
9 7.3728e-05
10 0.0002162688
11 0.00051904512
12 0.0010796138496
13 0.00201527918592
14 0.00345476431872
};
\addplot [semithick, red, forget plot]
table {%
0 0
1 0
2 0
3 0
4 0
5 0
6 0
7 0
8 0
9 1.024e-07
10 9.0112e-07
11 4.325376e-06
12 1.49946368e-05
13 4.198498304e-05
14 0.000100763959296
};
\addplot [semithick, red, forget plot]
table {%
0 0
1 0
2 0
3 0
4 0
5 0
6 0
7 0
8 0
9 0
10 0
11 4.096e-09
12 4.25984e-08
13 2.3855104e-07
14 9.54204160000001e-07
};
\addplot [semithick, red, forget plot]
table {%
0 0.2
1 0.32
2 0.384
3 0.4096
4 0.4096
5 0.393216
6 0.3670016
7 0.33554432
8 0.301989888
9 0.268435456
10 0.23622320128
11 0.206158430208
12 0.1786706395136
13 0.15393162805248
14 0.1319413972992
};
\addplot [semithick, red, forget plot]
table {%
0 0
1 0
2 0.008
3 0.0256
4 0.0512
5 0.08192
6 0.114688
7 0.14680064
8 0.176160768
9 0.201326592
10 0.2214592512
11 0.23622320128
12 0.2456721293312
13 0.25013889531904
14 0.25013889531904
};
\addplot [semithick, red, forget plot]
table {%
0 0
1 0
2 0
3 0
4 0.00032
5 0.001536
6 0.0043008
7 0.00917504000000001
8 0.016515072
9 0.0264241152
10 0.03875536896
11 0.0531502202880001
12 0.0690952863744001
13 0.0859852452659201
14 0.103182294319104
};
\addplot [semithick, red, forget plot]
table {%
0 0
1 0
2 0
3 0
4 0
5 0
6 1.28e-05
7 8.192e-05
8 0.000294912
9 0.000786432000000001
10 0.0017301504
11 0.003321888768
12 0.00575794053120001
13 0.00921270484992001
14 0.01381905727488
};
\addplot [semithick, red, forget plot]
table {%
0 0
1 0
2 0
3 0
4 0
5 0
6 0
7 0
8 5.12e-07
9 4.096e-06
10 1.80224e-05
11 5.767168e-05
12 0.000149946368
13 0.00033587986432
14 0.000671759728640001
};
\addplot [semithick, red, forget plot]
table {%
0 0
1 0
2 0
3 0
4 0
5 0
6 0
7 0
8 0
9 0
10 2.048e-08
11 1.96608e-07
12 1.0223616e-06
13 3.81681664e-06
14 1.145044992e-05
};
\addplot [semithick, blue]
table {%
0 0.8
1 0.64
2 0.512
3 0.4096
4 0.4096
5 0.393216
6 0.3670016
7 0.33554432
8 0.301989888
9 0.301989888
10 0.2952790016
11 0.283467841536
12 0.2680059592704
13 0.25013889531904
14 0.25013889531904
};
\addlegendentry{Maximum Proba.}
\addplot [semithick, black, dotted]
table {%
0 0.8
1 0.68
2 0.608
3 0.5648
4 0.53888
5 0.523328
6 0.5139968
7 0.50839808
8 0.505038848
9 0.5030233088
10 0.50181398528
11 0.501088391168
12 0.5006530347008
13 0.50039182082048
14 0.500235092492288
};
\addlegendentry{bound}
\end{axis}

\end{tikzpicture}
		\caption{Probability mass function of the sum of $d+1$ i.i.d. $\mathbb{Z}_{13}$-valued random variables with probability mass function $\mathbf{p}_0$  with $p_0=0.8$ and $p_7=0.2$.}
		\label{fig:z13}
	\end{figure}
	
	\begin{figure}[htbp!]
\begin{tikzpicture}[scale=0.95]

\definecolor{darkgray176}{RGB}{176,176,176}
\definecolor{lightgray204}{RGB}{204,204,204}

\begin{axis}[
legend cell align={left},
legend style={fill opacity=0.8, draw opacity=1, text opacity=1, draw=lightgray204, at={(0.02,0.02)},anchor=south west},
tick align=outside,
tick pos=left,
x grid style={darkgray176},
xlabel={\(\displaystyle d\)},
xmajorgrids,
xmin=-0.25, xmax=5.25,
xtick style={color=black},
y grid style={darkgray176},
ylabel={pmf components},
ymajorgrids,
ymin=-0.0125, ymax=0.2625,
ytick style={color=black}
]
\addplot [semithick, red]
table {%
0 0.25
1 0.1875
2 0.203125
3 0.19921875
4 0.19921875
5 0.199951171875
};
\addlegendentry{pmf components}
\addplot [semithick, red, forget plot]
table {%
0 0.25
1 0.1875
2 0.203125
3 0.203125
4 0.2001953125
5 0.199951171875
};
\addplot [semithick, red, forget plot]
table {%
0 0.25
1 0.1875
2 0.1875
3 0.19921875
4 0.2001953125
5 0.199951171875
};
\addplot [semithick, red, forget plot]
table {%
0 0.25
1 0.25
2 0.203125
3 0.19921875
4 0.2001953125
5 0.199951171875
};
\addplot [semithick, red, forget plot]
table {%
0 0
1 0.1875
2 0.203125
3 0.19921875
4 0.2001953125
5 0.2001953125
};
\addplot [semithick, blue]
table {%
0 0.25
1 0.25
2 0.203125
3 0.203125
4 0.2001953125
5 0.2001953125
};
\addlegendentry{Maximum Proba.}
\addplot [semithick, black, dotted]
table {%
0 0.25
1 0.25
2 0.203125
3 0.203125
4 0.2001953125
5 0.2001953125
};
\addlegendentry{bound}
\end{axis}

\end{tikzpicture}
		\caption{Probability mass function of the sum of $d+1$ i.i.d. $\mathbb{Z}_{5}$-valued random variables with probability mass function $\mathbf{p}_0 = (\tfrac{1}{4},\tfrac{1}{4},\tfrac{1}{4},\tfrac{1}{4},0)$.}
		\label{fig:z5}
	\end{figure}
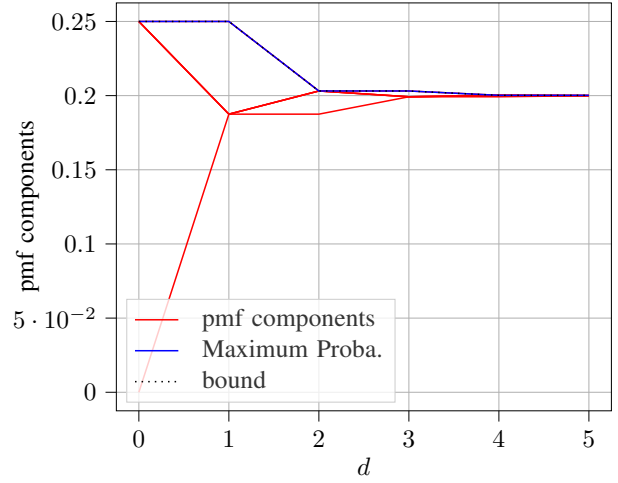

	\subsection{Bound Comparison}\label{toy-example} 
	
	Figs~\ref{fig:d1} and~\ref{fig:d2} compare both bounds for the toy example introduced. {\color{blueR} Though the bound obtained with $I_\infty$ does not change significantly from $d=1$ to $d=2$. The new bound performs better for this leakage. Especially for moderate (less than $10^5$) number of traces. The main limitation of this bound is that it relies on a high noise assumption.}  
	
	\begin{figure}[htbp!]
		\input{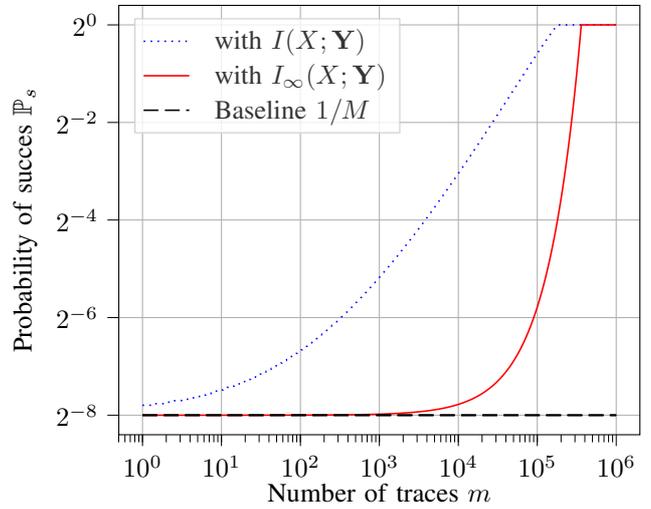}
		\caption{Comparison of the two upper bounds (ours, Corollary~\ref{cor-bound_PS}, versus state-of-the-art, namely \cite[Eqn.~8]{DBLP:journals/iacr/BeguinotCGLMRS22}) for $d=1$ and $M=256$}
		\label{fig:d1}
	\end{figure}
	
	\begin{figure}[htbp!]
		\input{ISITvsCOSADE_2}
		\caption{Comparison of the two upper bounds (ours, Corollary~\ref{cor-bound_PS}, versus state-of-the-art, namely \cite[Eqn.~8]{DBLP:journals/iacr/BeguinotCGLMRS22}) for $d=2$ and $M=256$}
		\label{fig:d2}
	\end{figure}
	
\end{appendix}

\end{document}